\newtheorem{theorem}{Theorem}
\newtheorem{lemma}{Lemma}
\newtheorem{proposition}{Proposition}
\newtheorem{corollary}{Corollary}
\newtheorem{definition}{Definition}
\newcommand{\prob}{\ensuremath{\mathbb{P}}}
\newcommand{\naturals}{\ensuremath{\mathbb{N}}}
\newcommand{\Reals}{\ensuremath{\mathbb{R}}}
\newcommand{\expectation}{\ensuremath{\mathbb{E}}}
\newcommand{\supp}{\mathop{\mathrm{supp}}}
\newcommand{\set}{\ensuremath{\mathcal}}
\newcommand{\tcl}[1]{\set{T}^{(#1)}}
\newcommand{\typ}[1]{\set{T}_{#1}}
\newcommand\Rentr[1]{H_{#1}}
\newcommand{\OneTo}[1]{[1:#1]}
\newcommand{\FromTo}[2]{[#1:#2]}  
\DeclareMathOperator{\identity}{id}
\DeclareMathOperator{\IID}{i.i.d.}
\newcommand{\tmax}{\textnormal{max}}
\begin{document}
\thispagestyle{empty}
\setcounter{page}{1}
\setlength{\baselineskip}{1.15\baselineskip}

\title{\huge{On Two-Stage Guessing}}
\author{Robert Graczyk \qquad Igal Sason \\
\thanks{R. Graczyk is with the Signal and Information Processing Laboratory,
ETH Zurich, 8092 Zurich, Switzerland (e-mail: grackzyk@isi.ee.ethz.ch).}
\thanks{I. Sason is with the Andrew and Erna Viterbi Faculty of Electrical and Computer Engineering, Technion--Israel
Institute of Technology, Haifa 3200003, Israel (e-mail: sason@ee.technion.ac.il).}
\thanks{The material in Section~\ref{section: Robert's part} was presented in part at
the {\em 2019~IEEE International Symposium on Information Theory}, Paris, France, July 2019.}
\thanks{Citation for this work: \newline
R. Graczyk and I. Sason, ``On two-stage guessing,'' {\em Information}, vol.~12, no.~4,
paper~159, pp.~1--20, April 2021.}}

\maketitle
\thispagestyle{empty}

\vspace*{-0.8cm}
\begin{abstract}
Stationary memoryless sources produce two correlated random sequences $X^n$ and $Y^n$. A guesser seeks
to recover $X^n$ in two stages, by first guessing $Y^n$ and then $X^n$. The contributions of this work are twofold:
(1)~We characterize the least achievable exponential growth rate (in $n$) of any positive $\rho$-th moment
of the total number of guesses when $Y^n$ is obtained by applying a deterministic function $f$ component-wise
to $X^n$. We prove that, depending on $f$, the least exponential growth rate in the two-stage setup is lower
than when guessing $X^n$ directly. We further propose a simple Huffman code-based construction of a function
$f$ that is a viable candidate for the minimization of the least exponential growth rate in the two-stage guessing setup.
(2)~We characterize the least achievable exponential growth rate of the $\rho$-th moment of the total number of
guesses required to recover $X^n$ when Stage~1 need not end with a correct guess of $Y^n$ and without assumptions
on the stationary memoryless sources producing $X^n$ and~$Y^n$.
\end{abstract}

{\bf{Keywords}}: {\small Guessing, majorization, method of types, Schur concavity, ranking function,
Shannon entropy, R\'{e}nyi entropy, Arimoto-R\'{e}nyi conditional entropy.}

\section{Introduction}
\label{section: Introduction}

Pioneered by Massey \cite{Massey94}, McEliece and Yu \cite{McElieceYu95}, and Arikan \cite{Arikan96},
the guessing problem is concerned with recovering the realization of a finite-valued random variable $X$
using a sequence of yes-no questions of the form ``Is $X=x_1$?'', ``Is $X=x_2$?'', etc., until correct.
A~commonly used performance metric for this problem is the $\rho$-th moment of the number of guesses until
$X$ is revealed (where $\rho$ is a positive parameter).

When guessing a length-$n$ $\IID$ sequence $X^n$ (a tuple of $n$ components that are drawn independently
according to the law of $X$), the $\rho$-th moment of the number of guesses required to recover the realization of
$X^n$ grows exponentially with $n$, and the exponential growth rate is referred to as the guessing exponent.
The least achievable guessing exponent was derived by Arikan \cite{Arikan96}, and it equals the order-$\frac1{1+\rho}$
R\'{e}nyi entropy of $X$. Arikan's result is based on the optimal deterministic guessing strategy, which proceeds
in descending order of the probability mass function (PMF) of $X^n$.

In this paper, we propose and analyze a two-stage guessing strategy to recover the realization of an $\IID$ sequence $X^n$.
In Stage~1, the guesser is allowed to produce guesses of an ancillary sequence $Y^n$ that is jointly $\IID$ with $X^n$.
In Stage~2, the guesser must recover $X^n$. We show the following:
\begin{enumerate}[1)]
 \item When $Y^n$ is generated by component-wise application
of a mapping $f \colon \set{X} \to \set{Y}$ to $X^n$ and the guesser is required to recover $Y^n$ in Stage~1 before proceeding
to Stage~2, then the least achievable guessing exponent (i.e., the exponential growth rate of the $\rho$-th moment of the total
number of guesses in the two stages) equals
\begin{equation}\label{eq:first}
 \rho \, \max \Bigl\{ \, H_{\frac1{1+\rho}}\bigl(f(X)\bigr), \;
 H_{\frac1{1+\rho}}\bigl(X \, | \, f(X) \bigr) \Bigr\},
\end{equation}
where the maximum is between the order-$\frac{1}{1 + \rho}$ R\'{e}nyi entropy of $f(X)$ and the conditional Arimoto--R\'{e}nyi entropy of $X$ given $f(X)$. We derive \eqref{eq:first} in Section \ref{section: guess Y^n first} and summarize our analysis in Theorem~\ref{theorem1: 2-stage guessing}. We also propose a Huffman code-based
construction of a function $f$ that is a viable candidate for the minimization of \eqref{eq:first} among all maps from $\set{X}$ to $\set{Y}$ (see Algorithm 2 and Theorem \ref{theorem2: 2-stage guessing}.)
 \item When $X^n$ and $Y^n$ are jointly i.i.d. according to the PMF $P_{XY}$ and Stage~1 need not end with a correct guess of $Y^n$ (i.e., the guesser may proceed to Stage 2 even if $Y^n$ remains unknown), then the least achievable guessing exponent equals
\begin{equation}\label{eq:second}
 \sup_{Q_{XY}} \Big(\rho\min\big\{H(Q_X), \max\big\{H(Q_Y), H(Q_{X \mid Y})\big\}\big\} - D(Q_{XY} || P_{XY})\Big),
\end{equation}
where the supremum is over all PMFs $Q_{XY}$ defined on the same set as $P_{XY}$; and $H(\cdot)$ and $D(\cdot \| \cdot)$ denote, respectively, the (conditional) Shannon entropy and the Kullback--Leibler divergence. We derive \eqref{eq:second} in Section \ref{section: Robert's part} and summarize our analysis in \mbox{Theorem \ref{thm:main_robert}}. Parts of Section \ref{section: Robert's part} were presented in the conference paper~\cite{GrackzykL19}.
\end{enumerate}

Our interest in the two-stage guessing problem is due to its relation to information measures: Analogous to how the R\'{e}nyi entropy can be defined operationally via guesswork, as opposed to its axiomatic definition, we view \eqref{eq:first} and \eqref{eq:second} as quantities that capture at what cost and to what extent knowledge of $Y$ helps in recovering $X$. For example, minimizing \eqref{eq:first} over descriptions $f(X)$ of $X$ can be seen as isolating the most beneficial information of $X$ in the sense that describing it in any more detail is too costly (the first term of the maximization in \eqref{eq:first} exceeds the second), whereas a coarser description leaves too much uncertainty (the second term exceeds the first). Similarly, but with the joint law of $(X, Y)$ fixed, \eqref{eq:second} quantifies the least (partial) information of $Y$ that benefits recovering $X$ (because an optimal guessing strategy will proceed to Stage 2 when guessing $Y$ no longer benefits guessing $X$). Note that while \eqref{eq:first} and \eqref{eq:second} are derived in this paper, studying their information-like properties is a subject of future research (see Section~\ref{sec:conclusion}.)

Besides its theoretic implications, the guessing problem is also applied practically in communications and cryptography. This includes sequential decoding
(\cite{ArikanM98-2, Boztas97}), and measuring password strength \cite{Cachin97}, confidentiality
of communication channels \cite{ArikanM99}, and resilience against brute-force attacks~\cite{BracherHL_IT19}.
It is also strongly related to task encoding and (lossless and lossy) compression (see, e.g.,
\cite{ArikanM98-1, BracherLP_IT17, BracherLP_Entropy19, ChristiansenD13, Sason18b, SasonV18b}).

Variations of the guessing problem include guessing under source uncertainty \cite{Sundaresan07}, distributed guessing \cite{MC20, Salam17}, and guessing
on the Gray--Wyner and the Slepian--Wolf network \cite{GrackzykL20}.

\section{Preliminaries}
\label{section: Preliminaries}

We begin with some notation and preliminary material that are essential
for the presentation in Section~\ref{section: guess Y^n first} ahead.
The analysis in Section~\ref{section: Robert's part} relies on the method
of types (see, e.g., Chapter~11 in \cite{Cover_Thomas}).

Throughout the paper, we use the following notation:
\begin{itemize}
\item For $m, n \in \naturals$ with $m<n$, let $\FromTo{m}{n} := \{m, \ldots, n\}$;
\item Let $P$ be a PMF that is defined on a finite set $\set{X}$. For
$k \in \OneTo{|\set{X}|}$, let $G_P(k)$ denote the sum of its $k$ largest point masses, and let
$p_\tmax := G_P(1)$. For $n \in \naturals$, denote by $\set{P}_n$ the set of all PMFs defined on $\OneTo{n}$.
\end{itemize}

The next definitions and properties are related to majorization and R\'{e}nyi measures.
\begin{definition}[{Majorization}]
\label{definition: majorization}
{Consider PMFs $P$ and $Q$, defined on the same
(finite or countably infinite) set $\set{X}$. We say that $Q$ {majorizes} $P$,
denoted $P \prec Q$, if $G_P(k) \leq G_Q(k)$ for all $k \in \OneTo{|\set{X}|}$.
If $P$ and $Q$ are defined on finite sets of different cardinalities, then the PMF
defined on the smaller set is zero-padded to match the cardinality of the larger~set.}
\end{definition}

By Definition~\ref{definition: majorization}, a unit mass majorizes
any other distribution; on the other hand, the uniform distribution (on a finite
set) is majorized by any other distribution of equal~support.

\begin{definition}[{Schur-convexity/concavity}]
{A function $f \colon \set{P}_n \to \Reals$ is {Schur-convex} if, for
every $P,Q \in \set{P}_n$ with $P \prec Q$, we have $f(P) \leq f(Q)$. Likewise,
$f$ is {Schur-concave} if $-\!f$ is Schur-convex, i.e., if $P \prec Q$ implies
that $f(P) \geq f(Q)$.}
\end{definition}

\begin{definition}[{R\'{e}nyi entropy} \cite{Renyientropy}]
\label{definition: Renyi entropy}
{Let $X$ be a random variable taking values on a finite or countably infinite set
$\set{X}$ according to the PMF $P_X$. The order-$\alpha$ R\'{e}nyi entropy
$H_{\alpha}(X)$ of $X$ is given by
\begin{equation} \label{eq: Renyi entropy}
 H_{\alpha}(X) := \frac1{1-\alpha} \, \log \left( \, \sum_{x \in \set{X}}
 P_X^{\alpha}(x)\right),\quad \alpha \in (0, 1) \cup (1, \infty),
\end{equation}
where unless explicitly given, the base of $\log(\cdot)$ can be chosen arbitrarily,
with $\exp(\cdot)$ denoting its inverse function. Via continuous extension,
\begin{align}
\label{eq: RE of zero order}
& H_0(X) := \log \, \bigl| \{x \in \set{X} \colon
P_X(x) > 0 \} \bigr|, \\
\label{eq: Shannon entropy}
& H_1(X) := H(X) = - \sum_{x \in \set{X}} P_X(x) \, \log P_X(x), \\
\label{eq: RE of infinite order}
& H_{\infty}(X) := \log \frac1{p_\tmax},
\end{align}
where $H(X)$ is the (Shannon) entropy of $X$.}
\end{definition}

\begin{proposition}[Schur-concavity of the R\'{e}nyi entropy, Appendix F.3.a of \cite{MarshallOA}]
\label{proposition: Renyi entropy is Schur-concave}
The R\'{e}nyi entropy of any order $\alpha > 0$ is Schur-concave
(in particular, the Shannon entropy is Schur-concave).
\end{proposition}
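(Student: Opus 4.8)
The plan is to reduce the claim to a single classical fact about \emph{separable} symmetric functions and then compose with the appropriate outer transformation, tracking signs carefully across the regimes $\alpha\in(0,1)$, $\alpha>1$, and the extension points $\alpha\in\{1,\infty\}$. Writing $g_\alpha(P):=\sum_{x\in\set{X}} P(x)^\alpha$, the order-$\alpha$ R\'enyi entropy is the monotone image $H_\alpha(P)=\tfrac{1}{1-\alpha}\log g_\alpha(P)$ for $\alpha\neq1$; since $\log$ is increasing, the sign of the prefactor $\tfrac{1}{1-\alpha}$ and the Schur-type behavior of $g_\alpha$ together determine the Schur behavior of $H_\alpha$.

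The key lemma I would establish (or cite) is: if $h\colon[0,1]\to\Reals$ is concave, then the separable sum $\Phi(P)=\sum_i h(P_i)$ is Schur-concave, and if $h$ is convex then $\Phi$ is Schur-convex. The cleanest route is the Hardy--Littlewood--P\'olya characterization: $P\prec Q$ holds iff $P=DQ$ for some doubly stochastic matrix $D$. For concave $h$, applying Jensen's inequality coordinatewise to $P_i=\sum_j D_{ij} Q_j$ gives $h(P_i)\ge\sum_j D_{ij}\,h(Q_j)$, and summing over $i$ while using $\sum_i D_{ij}=1$ yields $\Phi(P)\ge\Phi(Q)$; the convex case is identical with reversed inequalities. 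Alternatively one can invoke the Schur--Ostrowski criterion $(P_i-P_j)(\partial_i\Phi-\partial_j\Phi)\le0$, which for a separable $\Phi$ reduces to monotonicity of $h'$, i.e.\ to concavity of $h$.

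With the lemma in hand the cases fall out. For $\alpha\in(0,1)$ the map $t\mapsto t^\alpha$ is concave, so $g_\alpha$ is Schur-concave; composing with the increasing map $\tfrac{1}{1-\alpha}\log(\cdot)$ (note $\tfrac{1}{1-\alpha}>0$) preserves the inequality, so $H_\alpha$ is Schur-concave. For $\alpha>1$ the map $t\mapsto t^\alpha$ is convex, so $g_\alpha$ is Schur-convex, i.e.\ $P\prec Q$ implies $g_\alpha(P)\le g_\alpha(Q)$; now $\tfrac{1}{1-\alpha}<0$, so multiplying by it reverses the inequality and again gives $H_\alpha(P)\ge H_\alpha(Q)$. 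For the Shannon point $\alpha=1$ I would argue directly from $H(P)=\sum_i\eta(P_i)$ with $\eta(t)=-t\log t$ concave, and for $\alpha=\infty$ directly from $H_\infty(P)=\log\frac1{G_P(1)}$ together with the $k=1$ instance $G_P(1)\le G_Q(1)$ of Definition~\ref{definition: majorization}.

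The main obstacle is bookkeeping rather than depth: one must (i) correctly track the sign flip in the $\alpha>1$ regime, where the convexity of $t\mapsto t^\alpha$ and the negative prefactor $\tfrac{1}{1-\alpha}$ conspire to still produce Schur-\emph{concavity}; (ii) handle PMFs on sets of differing cardinality through the zero-padding convention of Definition~\ref{definition: majorization}, which is harmless here since $h(0)=0$ for every relevant $h$; and (iii) confirm that the continuous-extension values at $\alpha\in\{1,\infty\}$ inherit Schur-concavity, which the direct arguments above supply without appealing to continuity in the parameter $\alpha$.
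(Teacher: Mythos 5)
Your proof is correct, and it is essentially the standard argument that the paper itself does not reproduce but delegates to Appendix F.3.a of Marshall--Olkin--Arnold: separable sums of concave (resp.\ convex) functions are Schur-concave (resp.\ Schur-convex), followed by composition with $\tfrac{1}{1-\alpha}\log(\cdot)$ with the sign of the prefactor flipping exactly when the convexity of $t\mapsto t^\alpha$ does, plus the direct treatments of $\alpha=1$ and $\alpha=\infty$. The sign bookkeeping in the $\alpha>1$ case, the zero-padding remark, and the $k=1$ majorization inequality for $H_\infty$ are all handled correctly, so nothing is missing.
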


\begin{definition}[{Arimoto-R\'{e}nyi conditional entropy} \cite{Arimoto75}]
\label{definition: AR conditional entropy}
{Let $(X, Y)$ be a pair of random variables taking values on a product
set~$\set{X} \times \set{Y}$ according to the PMF $P_{XY}$. When $\set{X}$ is
finite or countably infinite, the order-$\alpha$ {Arimoto--R\'{e}nyi
conditional entropy} $H_{\alpha}(X | Y)$ of $X$ given $Y$ is defined as follows:
\begin{itemize}
\item
If $\alpha \in (0,1) \cup (1, \infty) $,
\begin{align}
\label{eq1: Arimoto - cond. RE}
H_{\alpha}(X | Y) &:= \frac{\alpha}{1-\alpha} \,
\log \, \mathbb{E} \left[
\left( \, \sum_{x \in \set{X}} P_{X|Y}^{\alpha}(x|Y)
\right)^{\frac1{\alpha}} \right].
\end{align}
When $\set{Y}$ is finite, \eqref{eq1: Arimoto - cond. RE} can be
simplified as follows:
\begin{align}
H_{\alpha}(X | Y) &= \frac{\alpha}{1-\alpha} \, \log
\sum_{y \in \set{Y}} \left( \, \sum_{x \in \set{X}} P_{XY}^{\alpha}(x,y)
\right)^{\frac1{\alpha}} \label{eq2a: Arimoto - cond. RE} \\
\label{eq2b: Arimoto - cond. RE}
&= \frac{\alpha}{1-\alpha} \, \log
\, \sum_{y \in \set{Y}} P_Y(y) \, \exp \left(
\frac{1-\alpha}{\alpha} \;
H_{\alpha}(X | Y=y) \right).
\end{align}
\item If $\alpha \in \{0, 1, \infty\}$ and $\set{Y}$ is finite, then, via continuous extension,
\begin{align}
\label{eq: cond. RE at 0}
H_0(X|Y)
&= \log \, \max_{y \in \set{Y}} \, \bigl| \supp P_{X|Y}(\cdot|y) \bigr| = \max_{y \in \set{Y}} \,
H_0(X \, | \, Y=y), \\[0.1cm]
\label{eq: cond. RE at 1}
H_1(X|Y) &= H(X|Y), \\[0.1cm]
\label{eq: cond. RE at infinity}
H_{\infty}(X|Y) &= \log \, \frac1{\sum_{y \in \set{Y}}
\Bigl\{ \underset{x \in \set{X}}{\max} \, P_{X|Y}(x|Y) \cdot P_Y(y) \Bigr\}}.
\end{align}
\end{itemize}}
\end{definition}
The properties of the Arimoto--R\'{e}nyi conditional entropy were studied in \cite{FehrB14,SasonV18a}.

Finally, $\set{F}_{n,m}$ with $n,m \in \naturals$ denotes the set of all {deterministic}
functions $f \colon \OneTo{n} \to \OneTo{m}$. If $m<n$, then a
function $f \in \set{F}_{n,m}$ is not one-to-one (i.e, it is a non-injective function).

\section{Two-Stage Guessing: $Y_i = f(X_i)$}
\label{section: guess Y^n first}

Let $X^n := (X_1, \ldots, X_n)$ be a sequence of i.i.d. random variables taking
values on a finite set~$\set{X}$. Assume without loss of generality that
$\set{X} = \OneTo{|\set{X}|}$. Let $m \in \FromTo{2}{|\set{X}|-1}$,
$\set{Y} := \OneTo{m}$, and let $f \colon \set{X} \to \set{Y}$ be a fixed
deterministic function (i.e., $f \in \set{F}_{|\set{X}|, m}$).
Consider guessing $X^n \in \set{X}^n$ in two stages as follows.

\newpage
{\underline{Algorithm 1 (two-stage guessing algorithm)}}:
\begin{enumerate}[a)]
\item {Stage~1}: $Y^n := \bigl( f(X_1), \ldots, f(X_n) \bigr) \in \set{Y}^n$
is guessed by asking questions of the form:
\begin{itemize}
 \item[] ``Is $Y^n = \widehat{\boldsymbol{y}}_1$?'',
 ``Is $Y^n = \widehat{\boldsymbol{y}}_2$?'', \ldots
\end{itemize}
until correct. Note that as $|\set{Y}| = m < |\set{X}|$, this stage cannot reveal $X^n$.
\item {Stage~2}: Based on $Y^n$, the sequence $X^n \in \set{X}^n$ is guessed
by asking questions of the~form:
\begin{itemize}
 \item[] ``Is $X^n = \widehat{\boldsymbol{x}}_1$?'',
 ``Is $X^n = \widehat{\boldsymbol{x}}_2$?'', \ldots
\end{itemize}
until correct. If $Y^n = y^n$, the guesses
$\widehat{\boldsymbol{x}}_k := (\widehat{x}_{k, 1}, \ldots,
\widehat{x}_{k, n})$ are restricted to $\set{X}$-sequences, which
satisfy $f(\widehat{x}_{k, i}) = y_i$ for all $i \in \OneTo{n}$.
\end{enumerate}

The guesses $\widehat{\boldsymbol{y}}_1$, $\widehat{\boldsymbol{y}}_2$, \ldots,
in Stage 1 are in descending order of probability as measured by $P_{Y^n}$
(i.e., $\widehat{\boldsymbol{y}}_1$ is the most probable
sequence under $P_{Y^n}$; $\widehat{\boldsymbol{y}}_2$ is the second
most probable; and so on; ties are resolved arbitrarily). We denote
the index of $y^n \in \set{Y}^n$ in this guessing order by $g_{Y^n}(y^n)$. Note that
because every sequence $y^n$ is guessed exactly once, $g_{Y^n}(\cdot)$ is a
bijection from $\set{Y}^n$ to $\OneTo{m^n}$; we refer to such bijections as
{ranking functions}. The guesses $\widehat{\boldsymbol{x}}_1$,
$\widehat{\boldsymbol{x}}_2$, \ldots, in Stage 2 depend on $Y^n$ and are in
descending order of the posterior $P_{X^n | Y^n}(\cdot | Y^n)$. Following our
notation from Stage 1, the index of $x^n \in \set{X}^n$ in the guessing order
induced by $Y^n = y^n$ is denoted $g_{X^n | Y^n}(x^n | y^n)$. Note that for
every $y^n \in \set{Y}^n$, the function $g_{X^n | Y^n}(\cdot | y^n)$ is a ranking
function on $\set{X}^n$. Using $g_{Y^n}(\cdot)$ and $g_{X^n | Y^n}(\cdot |
\cdot)$, the total number of guesses $G_2(X^n)$ in Algorithm~1 can be
expressed as
\begin{align} \label{G1}
 G_2(X^n) = g_{Y^n}(Y^n) + g_{X^n | Y^n}(X^n | Y^n),
\end{align}
where $g_{Y^n}(Y^n)$ and $g_{X^n | Y^n}(X^n | Y^n)$ are the number of guesses in
Stages 1 and 2, respectively. Observe that guessing in descending order of probability
minimizes the $\rho$-th moment of the number of guesses in both stages of Algorithm 1.
By \cite{Arikan96}, for every $\rho > 0$, the guessing moments $\expectation\bigl[g_{Y^n}^{\rho}(Y^n)\bigr]$
and $\expectation\bigl[g_{X^n|Y^n}^{\rho}(X^n|Y^n)\bigr]$ can be (upper and lower) bounded
in terms of $H_{\frac1{1+\rho}}(Y)$ and $H_{\frac1{1+\rho}}(X | Y)$ as follows:
\begin{subequations}\label{eq:arikan_bounds}
\begin{align}
\bigl( 1 + n \ln m \bigr)^{-\rho} \; \exp\Bigl( n \rho \, H_{\frac1{1+\rho}}(Y)
\Bigr) &\leq \expectation\bigl[g_{Y^n}^{\rho}(Y^n)\bigr]\nonumber\\
&\leq \exp\Bigl( n \rho \,
H_{\frac1{1+\rho}}(Y) \Bigr),\label{Arikan96-LB1-UB1}\\[1em]
\bigl(1 + n \ln |\set{X}| \bigr)^{-\rho} \, \exp\Bigl( n \rho \,
H_{\frac1{1+\rho}}\bigl(X \, | \, f(X) \bigr) \Bigr)
&\leq \expectation\bigl[g_{X^n|Y^n}^{\rho}(X^n|Y^n)\bigr]\nonumber\\
&\leq \exp\Bigl( n \rho \, H_{\frac1{1+\rho}}\bigl(X \, | \, f(X) \bigr) \Bigr).
\label{Arikan96-LB2-UB2}
\end{align}
\end{subequations}

Combining \eqref{G1} and \eqref{eq:arikan_bounds}, we next establish bounds on
$\expectation[G_2(X^n)^\rho]$. In light of \eqref{G1}, we begin with bounds on
the $\rho$-th power of a sum.
\begin{lemma} \label{lemma1}
Let $k \in \naturals$, and let $\{a_i\}_{i=1}^k$ be a non-negative sequence.
For every $\rho > 0$,
\begin{align} \label{lemma1: 1}
s_1(k,\rho) \sum_{i=1}^k a_i^\rho \leq \left( \sum_{i=1}^k a_i \right)^\rho
\leq s_2(k,\rho) \sum_{i=1}^k a_i^\rho,
\end{align}
where
\begin{align} \label{lemma1: s1}
s_1(k,\rho) :=
\begin{cases}
1, & \textnormal{if} \; \rho \geq 1 \\
k^{\rho-1}, & \textnormal{if} \; \rho \in (0,1),
\end{cases}
\end{align}
and
\begin{align} \label{lemma1: s2}
s_2(k,\rho) :=
\begin{cases}
k^{\rho-1}, & \textnormal{if} \; \rho \geq 1 \\
1, & \textnormal{if} \; \rho \in (0,1).
\end{cases}
\end{align}
\begin{itemize}
\item
If $\rho \geq 1$, then the left and right inequalities in \eqref{lemma1: 1} hold
with equality if, respectively, $k-1$ of the $a_i$'s are equal to zero or $a_1 =
\ldots = a_k$;
\item
if $\rho \in (0,1)$, then the left and right inequalities in \eqref{lemma1: 1}
hold with equality if, respectively, $a_1 = \ldots = a_k$ or $k-1$ of the
$a_i$'s are equal to zero.
\end{itemize}
\end{lemma}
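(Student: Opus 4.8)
The plan is to normalize the sequence and thereby reduce both inequalities to the elementary behavior of the map $x \mapsto x^{\rho}$ on $[0,1]$ and on the probability simplex. First I would dispose of the degenerate case $\sum_{i=1}^k a_i = 0$, in which every $a_i$ vanishes and all three expressions in \eqref{lemma1: 1} are zero. Otherwise, setting $S := \sum_{i=1}^k a_i > 0$ and $t_i := a_i / S$, one has $t_i \in [0,1]$ and $\sum_{i=1}^k t_i = 1$, and dividing \eqref{lemma1: 1} through by $S^{\rho} = \bigl(\sum_i a_i\bigr)^{\rho}$ (using $\sum_i a_i^{\rho} = S^{\rho} \sum_i t_i^{\rho}$) turns the whole claim into the single two-sided estimate
\begin{equation*}
 \min\{1,\, k^{1-\rho}\} \;\leq\; \sum_{i=1}^k t_i^{\rho} \;\leq\; \max\{1,\, k^{1-\rho}\}.
\end{equation*}
It therefore suffices to locate the extrema of $\phi(t) := \sum_{i=1}^k t_i^{\rho}$ over the simplex $\sum_i t_i = 1$, $t_i \geq 0$.

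The two bounds then come from two complementary facts, applied according to whether $x^{\rho}$ is convex or concave. For the unit-coefficient side I would use the pointwise comparison of $t^{\rho}$ with $t$ on $[0,1]$: when $\rho \geq 1$ we have $t_i^{\rho} \leq t_i$, hence $\phi(t) \leq \sum_i t_i = 1$; when $\rho \in (0,1)$ we have $t_i^{\rho} \geq t_i$, hence $\phi(t) \geq 1$. For the $k^{1-\rho}$ side I would invoke Jensen's inequality for the convex (respectively concave) function $x^{\rho}$,
\begin{equation*}
 \frac1k \sum_{i=1}^k t_i^{\rho} \;\gtrless\; \Bigl(\frac1k \sum_{i=1}^k t_i\Bigr)^{\!\rho} = k^{-\rho},
\end{equation*}
with ``$\geq$'' for $\rho \geq 1$ and ``$\leq$'' for $\rho \in (0,1)$, i.e.\ $\phi(t) \gtrless k^{1-\rho}$. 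Multiplying back by $S^{\rho}$ reproduces exactly \eqref{lemma1: 1} with $s_1, s_2$ as in \eqref{lemma1: s1}--\eqref{lemma1: s2}; the roles of the two bounds swap as $\rho$ crosses $1$ precisely because the convexity of $x^{\rho}$ reverses there.

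Finally, I would read off the equality conditions directly from the two facts just used. The pointwise bound $t^{\rho} = t$ holds only at $t \in \{0,1\}$ when $\rho \neq 1$; since $\sum_i t_i = 1$, equality on the unit-coefficient side forces exactly one $t_i$ to equal $1$ and the remaining $k-1$ to vanish, i.e.\ $k-1$ of the $a_i$ are zero (the left inequality when $\rho \geq 1$, the right one when $\rho \in (0,1)$). Jensen's inequality for the strictly convex/concave $x^{\rho}$ ($\rho \neq 1$) is tight exactly when all $t_i$ coincide, i.e.\ $a_1 = \cdots = a_k$ (the right inequality when $\rho \geq 1$, the left one when $\rho \in (0,1)$); for $\rho = 1$ both coefficients equal $1$ and the inequalities are identities. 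I do not expect a genuine obstacle here---this is a standard comparison of $\ell^1$- and $\ell^{\rho}$-type quantities---so the only point demanding care is bookkeeping: tracking which of the two bounds carries the $k^{\rho-1}$ factor in each regime, and checking that each equality description is attached to the side ($s_1$ versus $s_2$) asserted in the statement.
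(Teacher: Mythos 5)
Your proposal is correct and follows essentially the same route as the paper's proof: Jensen's inequality for the convex/concave map $x \mapsto x^{\rho}$ gives the $k^{\rho-1}$ side, and the pointwise comparison of $t^{\rho}$ with $t$ on $[0,1]$ (applied to the normalized weights $a_i/\sum_j a_j$) gives the unit-coefficient side, with the two bounds swapping roles as $\rho$ crosses $1$. Your treatment of the equality conditions is slightly more explicit than the paper's, which simply states they are easily verified.
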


\begin{proof}
See Appendix~\ref{appendix: proof of Lemma 1}.
\end{proof}

\bigskip
Using the shorthand notation $k_1(\rho) := s_1(2, \rho)$, $k_2(\rho) := s_2(2,
\rho)$, we apply \mbox{Lemma \ref{lemma1}} in conjunction with \eqref{G1} and
\eqref{eq:arikan_bounds} (and the fact that $|\set{X}| \geq m$) to bound
$\expectation \bigl[ G_2^{\rho}(X^n) \bigr]$ as~follows:
\begin{subequations}\label{eq:raw_bounds_on_moment}
\begin{align}
& k_1(\rho)\bigl( 1 + n \ln |\set{X}| \bigr)^{-\rho} \left[
\exp\Bigl( n \rho \, H_{\frac1{1+\rho}}\bigl(f(X)\bigr) \Bigr)
+ \exp\Bigl( n \rho \,
H_{\frac1{1+\rho}}\bigl(X \, | \, f(X) \bigr) \Bigr) \right] \nonumber \\
& \quad \leq \expectation \bigl[ G_2^{\rho}(X^n) \bigr] \label{eq: 240220202a1} \\
\label{eq: 240220202a2}
& \quad \leq k_2(\rho) \left[
\exp\Bigl( n \rho \, H_{\frac1{1+\rho}}\bigl(f(X)\bigr) \Bigr)
+ \exp\Bigl( n \rho \, H_{\frac1{1+\rho}}\bigl(X \, | \, f(X) \bigr) \Bigr) \right].
\end{align}
\end{subequations}

The bounds in \eqref{eq:raw_bounds_on_moment} are asymptotically tight as $n$
tends to infinity. To see this, note that
\begin{align}
 \lim_{n \to \infty} \frac{1}{n} \ln \left(k_1(\rho)\bigl( 1 + n \ln
 |\set{X}| \bigr)^{-\rho}\right) = 0, \quad
 \lim_{n \to \infty} \frac{1}{n} \ln k_2(\rho) = 0,
\end{align}
and therefore, for all $\rho > 0$,
\begin{align}\label{eq:limit_as_exp_sum}
 &\lim_{n \to \infty} \frac{1}{n} \ln \expectation \bigl[ G_2^{\rho}(X^n) \bigr]\nonumber\\
 &\quad = \lim_{n \to \infty} \frac{1}{n} \log\biggl( \exp\Bigl( n \rho \,
 H_{\frac1{1+\rho}}\bigl(f(X)\bigr) \Bigr) + \exp\Bigl( n \rho \,
 H_{\frac1{1+\rho}}\bigl(X \, | \, f(X) \bigr) \Bigr)\biggr).
\end{align}

Since the sum of the two exponents on the right-hand side (RHS) of \eqref{eq:limit_as_exp_sum}
is dominated by the larger exponential growth rate, it follows that
\begin{align}\label{eq:dominated_by_larger_exponent}
 &\lim_{n \to \infty} \frac{1}{n} \log\left( \exp\Bigl( n \rho \,
 H_{\frac1{1+\rho}}\bigl(f(X)\bigr) \Bigr) + \exp\Bigl( n \rho \,
 H_{\frac1{1+\rho}}\bigl(X \, | \, f(X) \bigr) \Bigr)\right)\nonumber\\
 &\quad = \rho \, \max \Bigl\{ \, H_{\frac1{1+\rho}}\bigl(f(X)\bigr), \;
 H_{\frac1{1+\rho}}\bigl(X \, | \, f(X) \bigr) \Bigr\},
\end{align}
and thus, by \eqref{eq:limit_as_exp_sum} and \eqref{eq:dominated_by_larger_exponent},
\begin{align} \label{eq: 240220202b}
E_2(X; \rho, m, f) &:= \lim_{n \to \infty} \, \tfrac1n \log \expectation[G_2^{\rho}(X^n)] \\
&= \rho \, \max \Bigl\{ \, H_{\frac1{1+\rho}}\bigl(f(X)\bigr), \;
H_{\frac1{1+\rho}}\bigl(X \, | \, f(X) \bigr) \Bigr\}. \label{eq: 11022021a}
\end{align}

As a sanity check, note that if $m = |\set{X}|$ and $f$ is the identity function
$\identity$ (i.e., $\identity(x) = x$ for all $x \in \set{X}$), then $X^n$ is revealed
in Stage 1, and (with Stage 2 obsolete) the $\rho$-th moment of the total number of guesses grows
exponentially with rate
$$\rho \, H_{\frac1{1+\rho}}\bigl(\identity(X)\bigr) = \rho \,
H_{\frac1{1+\rho}}\bigl(X\bigr).$$

This is in agreement with the RHS of \eqref{eq: 240220202b}, as
\begin{align}
 &\rho \, \max \Bigl\{ \, H_{\frac1{1+\rho}}\bigl(\identity(X)\bigr), \;
 H_{\frac1{1+\rho}}\bigl(X \, | \, \identity(X) \bigr) \Bigr\}\nonumber\\
 &\quad = \rho \, \max \Bigl\{ \, H_{\frac1{1+\rho}}\bigl(X\bigr), \;
 H_{\frac1{1+\rho}}\bigl(X \, | \, X \bigr) \Bigr\}\\
 &\quad = \rho \, \max \Bigl\{ \, H_{\frac1{1+\rho}}\bigl(X\bigr), 0 \Bigr\}\\
 &\quad = \rho \, H_{\frac1{1+\rho}}\bigl(X\bigr).
\end{align}

We summarize our results so far in Theorem \ref{theorem1: 2-stage guessing}
below.
\begin{theorem} \label{theorem1: 2-stage guessing}
Let $X^n = (X_1, \ldots, X_n)$ be a sequence of i.i.d. random variables, each drawn according to the PMF $P_X$ of
support $\set{X} := \OneTo{|\set{X}|}$. Let $m \in \FromTo{2}{|\set{X}|-1}$,
$f \in \set{F}_{|\set{X}|,m}$, and define $Y^n := (f(X_1), \ldots, f(X_n))$. When
guessing $X^n$ according to Algorithm 1 (i.e., after first guessing $Y^n$ in descending
order of probability as measured by $P_{Y^n}(\cdot)$ and proceeding
in descending order of probability as measured by $P_{X^n|Y^n}(\cdot|Y^n)$ for guessing $X^n$),
the $\rho$-th moment of the total number of guesses $G_2(X^n)$ satisfies
\begin{enumerate}[a)]
\item
the lower and upper bounds in \eqref{eq:raw_bounds_on_moment} for all $n \in \naturals$
and $\rho > 0$;
\item
the asymptotic characterization \eqref{eq: 240220202b} for
$\rho > 0$ and $n \to \infty$.
\end{enumerate}
\end{theorem}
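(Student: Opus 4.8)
The plan is to prove both parts directly from the decomposition \eqref{G1}, Arikan's per-stage bounds \eqref{eq:arikan_bounds}, and Lemma~\ref{lemma1}. For part~a) I would begin with the identity \eqref{G1}, which writes $G_2(X^n)$ as the sum of the two per-stage guess counts $a_1 := g_{Y^n}(Y^n)$ and $a_2 := g_{X^n|Y^n}(X^n|Y^n)$. Applying Lemma~\ref{lemma1} with $k=2$ gives the pointwise sandwich $k_1(\rho)\,(a_1^\rho + a_2^\rho) \leq G_2^\rho(X^n) \leq k_2(\rho)\,(a_1^\rho + a_2^\rho)$, where $k_1(\rho) = s_1(2,\rho)$ and $k_2(\rho) = s_2(2,\rho)$. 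Taking expectations and using linearity reduces the problem to bounding the two marginal moments $\expectation[a_1^\rho]$ and $\expectation[a_2^\rho]$, which is exactly what \eqref{eq:arikan_bounds} provides.

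The second step is therefore to invoke \eqref{eq:arikan_bounds}. For the upper bound I would apply the upper inequalities in \eqref{Arikan96-LB1-UB1} and \eqref{Arikan96-LB2-UB2}, substituting $Y = f(X)$ so that $H_{\frac1{1+\rho}}(Y) = H_{\frac1{1+\rho}}(f(X))$; multiplying the resulting sum by $k_2(\rho)$ yields \eqref{eq: 240220202a2} directly. For the lower bound the one point requiring care is that the two Arikan prefactors differ: $(1+n\ln m)^{-\rho}$ for Stage~1 and $(1+n\ln|\set{X}|)^{-\rho}$ for Stage~2. Since $m \leq |\set{X}|$, the former dominates the latter, so both per-stage lower bounds remain valid after replacing their prefactors by the common weaker factor $(1+n\ln|\set{X}|)^{-\rho}$. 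This lets me extract that factor from the sum and, after multiplying by $k_1(\rho)$, obtain \eqref{eq: 240220202a1}. This settles part~a) for all $n \in \naturals$ and $\rho>0$.

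For part~b) I would take $\tfrac1n\log(\cdot)$ of the bounds from part~a) and let $n\to\infty$. Because the prefactors $k_1(\rho)(1+n\ln|\set{X}|)^{-\rho}$ and $k_2(\rho)$ are subexponential in $n$, their normalized logarithms vanish in the limit, so the upper and lower bounds share the same exponential growth rate, namely that of the two-term sum $\exp(n\rho H_{\frac1{1+\rho}}(f(X))) + \exp(n\rho H_{\frac1{1+\rho}}(X|f(X)))$. Squeezing this sum between $\exp(n\rho M)$ and $2\exp(n\rho M)$ with $M := \max\bigl\{H_{\frac1{1+\rho}}(f(X)),\, H_{\frac1{1+\rho}}(X|f(X))\bigr\}$ shows its growth rate is exactly $\rho M$, which is \eqref{eq: 240220202b}.

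I expect no deep obstacle, as each ingredient is either cited or proved separately; the only genuinely delicate step is the lower bound in part~a), where the mismatch between the two Arikan prefactors must be reconciled via $m \leq |\set{X}|$ before a common factor can be pulled out. A secondary point worth flagging is that the conditional Stage~2 bound \eqref{Arikan96-LB2-UB2} applies verbatim precisely because $Y = f(X)$ is deterministic: the posterior $P_{X^n|Y^n}(\cdot|y^n)$ is supported exactly on the $f$-consistent $\set{X}$-sequences, so guessing $X^n$ in descending posterior order is the optimal conditional strategy whose $\rho$-th moment is governed by the Arimoto--R\'enyi conditional entropy $H_{\frac1{1+\rho}}(X|f(X))$.
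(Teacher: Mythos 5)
Your proposal is correct and follows essentially the same route as the paper: the decomposition \eqref{G1}, Lemma~\ref{lemma1} with $k=2$, Arikan's per-stage bounds \eqref{eq:arikan_bounds} with the prefactors unified via $m \leq |\set{X}|$, and the subexponentiality of the prefactors together with the dominance of the larger exponent for the asymptotics. No gaps.
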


\subsection*{{A Suboptimal and Simple Construction of $f$ in Algorithm~1 and
Bounds on \boldmath{$\expectation \bigl[ G_2^{\rho}(X^n) \bigr]$}}}

Having established in Theorem \ref{theorem1: 2-stage guessing} that
\begin{equation}\label{eq:approx_total_nr_of_guesses}
 \expectation \bigl[ G_2^{\rho}(X^n) \bigr]
 \approx \exp\bigl(n E_2(X; \rho, m, f)\bigr), \quad \rho > 0,
\end{equation}
we now seek to minimize the exponent $E_2(X; \rho, m, f)$ in the RHS of
\eqref{eq: 11022021a} (for given PMF $P_X$, $\rho > 0$, and
$m \in \FromTo{2}{|\set{X}| - 1}$) over all $f \in \set{F}_{|\set{X}|, m}$.

We proceed by considering a {sub-optimal and simple} construction of $f$, which
enables obtaining explicit bounds as a function of the PMF $P_X$
and the value of $m$, while this construction also does not depend on $\rho$.

For a fixed $m \in \FromTo{2}{|\set{X}|-1}$,
a non-injective deterministic function $f_m^\ast \colon \set{X} \to \OneTo{m}$
is constructed by relying on the Huffman algorithm for lossless compression of $X := X_1$.
This construction also (almost) achieves the maximal mutual information $I(X; f(X))$ among
all deterministic functions $f \colon \set{X} \to \OneTo{m}$ (this issue is
elaborated in the sequel). Heuristically, apart from its simplicity, the motivation
of this sub-optimal construction can be justified since it is expected to reduce the guesswork in Stage~2 of Algorithm~1, where one
wishes to guess $X^n$ on the basis of the knowledge of $Y^n$ with $Y_i = f(X_i)$ for all
$i \in \OneTo{n}$. In this setting, it is shown that the upper and lower bounds
on $\expectation \bigl[ G_2^{\rho}(X^n) \bigr]$ are (almost) asymptotically tight in terms of their exponential growth
rate in $n$. Furthermore, these exponential bounds demonstrate a reduction in the required number
of guesses for $X^n$, as compared to the {optimal one-stage guessing} of $X^n$.

In the sequel, the following construction of a deterministic function
$f_m^\ast \colon \set{X} \to \OneTo{m}$ is analyzed; this construction
was suggested in the proofs of \cite[Lemma~5]{CicaleseGV18} and \cite[Theorem~2]{Sason18b}.

\newpage
{\underline{Algorithm 2 (construction of $f_m^\ast$)}}:
\begin{enumerate}[a)]
\item Let $i := 1, P_1 := P_X$.
\item If $|\supp(P_i)| = m$, let $R := P_i$, and go to Step c. If not, let
$P_{i + 1} := P_i$ with its two least likely symbols merged as in the Huffman code
construction. Let $i \leftarrow i + 1$, and go to Step b.
\item Construct $f_m^\ast \in \set{F}_{|\set{X}|,m}$ by setting $f_m^\ast(k)=j$ if $P_1(k)$ has been merged into $R(j)$.
\end{enumerate}

We now define $Y^{\ast \, n} := (Y_1^\ast, \ldots, Y_n^\ast)$ with
\begin{align} \label{contructed f}
Y_i^\ast := f_m^\ast(X_i), \quad \forall \, i \in \OneTo{n}.
\end{align}

Observe that due to \cite{CicaleseGV18} (Corollary~3 and Lemma~5) and because $f_m^\ast(\cdot)$
operates component-wise on the i.i.d. vector $X^n$, the following lower bound on $\tfrac1n \, I\bigl(X^n; Y^{\ast \, n} \bigr)$ applies:
\begin{align}
\tfrac1n \, I\bigl(X^n; Y^{\ast \, n} \bigr)
&= I\bigl(X; f_m^\ast(X)\bigr) \nonumber \\
&= H\bigl(f_m^\ast(X)\bigr) \nonumber \\
&\geq \max_{f \in \set{F}_{|\set{X}|,m}} H\bigl( f(X) \bigr) - \beta^\ast \nonumber \\
&= \max_{f \in \set{F}_{|\set{X}|,m}} I\bigl(X; f(X)\bigr) - \beta^\ast \nonumber \\
&= \max_{f \in \set{F}_{|\set{X}|,m}} \tfrac1n \, I\bigl(X^n; Y^n \bigr) - \beta^\ast, \label{MI1}
\end{align}
where $Y^n := (f(X_1), \ldots, f(X_n))$ and
\begin{align} \label{beta^ast}
\beta^\ast := \log\biggl(\frac{2}{\mathrm{e} \, \ln 2}\biggr) \approx 0.08607 \, \mbox{bits.}
\end{align}

From the proof of \cite{CicaleseGV18} (Theorem~3), we further have
the following multiplicative~bound:
\begin{align} \label{MI2}
& \tfrac1n \, I\bigl(X^n; Y^{\ast \, n} \bigr) \geq \frac{10}{11}
\, \max_{f \in \set{F}_{|\set{X}|,m}} \frac1n \, I\bigl(X^n; Y^n \bigr).
\end{align}

Note that by \cite{CicaleseGV18} (Lemma~1), the
maximization problem in the RHS of \eqref{MI1} is strongly NP-hard \cite{GareyJ79}.
This means that, unless $\text{P}=\text{NP}$, there is no polynomial-time
algorithm that, given an arbitrarily small $\varepsilon > 0$, produces
a deterministic function $f^{(\varepsilon)} \in \set{F}_{|\set{X}|,m}$~satisfying
\begin{align} \label{strongly NP-hard}
I\bigl(X; f^{(\varepsilon)}(X)\bigr) \geq (1-\varepsilon)
\underset{f \in \set{F}_{|\set{X}|,m}}{\max} I\bigl(X; f(X) \bigr).
\end{align}

We next examine the performance of our candidate function
$f_m^\ast$ when applied in Algorithm~1. To that end, we
first bound $\expectation\bigl[g_{Y^n}^{\rho}(Y^{\ast \, n})\bigr]$
in terms of the R\'{e}nyi entropy of a suitably defined random
variable $\widetilde{X}_m \in \OneTo{m}$ constructed
in Algorithm~3 below. In~the construction, we assume without
loss of generality that $P_X(1) \geq \ldots \geq P_X(|\set{X}|)$
and denote the PMF of $\widetilde{X}_m$ by $Q := R_m(P_X)$.

\vspace*{0.2cm}
{\underline{Algorithm 3 (construction of the PMF $Q := R_m(P_X)$ of the random variable $\widetilde{X}_m$)}}:
\begin{itemize}
\item If $m=1$, then $Q := R_1(P_X)$ is defined to be a point mass at~one;
\item If $m = |\set{X}|$, then $Q := R_{|\set{X}|}(P_X)$ is defined to be equal to $P_X$.
\end{itemize}

Furthermore, for $m \in \FromTo{2}{|\set{X}|-1}$,
\begin{enumerate}[a)]
\item If $P_X(1) < \frac1m$, then $Q$ is defined to be the equiprobable distribution on
$\OneTo{m}$;
\item Otherwise, the PMF $Q$ is defined as
\begin{align} \label{PMF Q}
Q(i) :=
\begin{dcases}
P_X(i), & \mbox{if} \; i \in \OneTo{m^\ast}, \\[0.2cm]
\frac1{m-m^\ast} \sum_{j=m^\ast+1}^{|\set{X}|} P_X(j),
& \mbox{if} \; i \in \{m^\ast+1, \ldots, m\},
\end{dcases}
\end{align}
where $m^\ast$ is the maximal integer $i \in \OneTo{m-1}$,
which satisfies
\begin{align} \label{max integer}
P_X(i) \geq \frac1{m-i} \sum_{j=i+1}^{|\set{X}|} P_X(j).
\end{align}
\end{enumerate}

Algorithm 3 was introduced in \cite{CicaleseGV18,CicaleseGV19}.
The link between the R\'{e}nyi entropy of $\widetilde{X}_m$
and that of $Y^\ast$ was given in Eq.~(34) of \cite{Sason18b}:
\begin{align} \label{range of RE}
H_{\alpha}(Y^\ast) \in \bigl[H_{\alpha}(\widetilde{X}_m)
- v(\alpha), \; H_{\alpha}(\widetilde{X}_m) \bigr],\quad
\forall \alpha > 0,
\end{align}
where
\begin{align} \label{v function}
v(\alpha) :=
\begin{dcases}
\log \left( \frac{\alpha-1}{2^\alpha-2} \right) - \frac{\alpha}{\alpha-1} \,
\log \left(\frac{\alpha}{2^\alpha-1}\right), & \alpha \neq 1 \\[0.1cm]
\log \left(\frac{2}{e \, \ln 2}\right) \approx 0.08607 \; \text{bits}, & \alpha=1.
\end{dcases}
\end{align}

The function $v \colon (0, \infty) \to (0, \log 2)$ is depicted in
Figure~\ref{figure:v_plot} on the following page. It is monotonically
increasing, continuous, and it satisfies
\begin{align}
\lim_{\alpha \downarrow 0} v(\alpha) = 0, \quad
\lim_{\alpha \to \infty} v(\alpha) = \log 2 \; (\mbox{= 1 bit}). \label{limits v}
\end{align}

\begin{figure}[ht]
\vspace*{-4.3cm}
\centerline{\includegraphics[width=11cm]{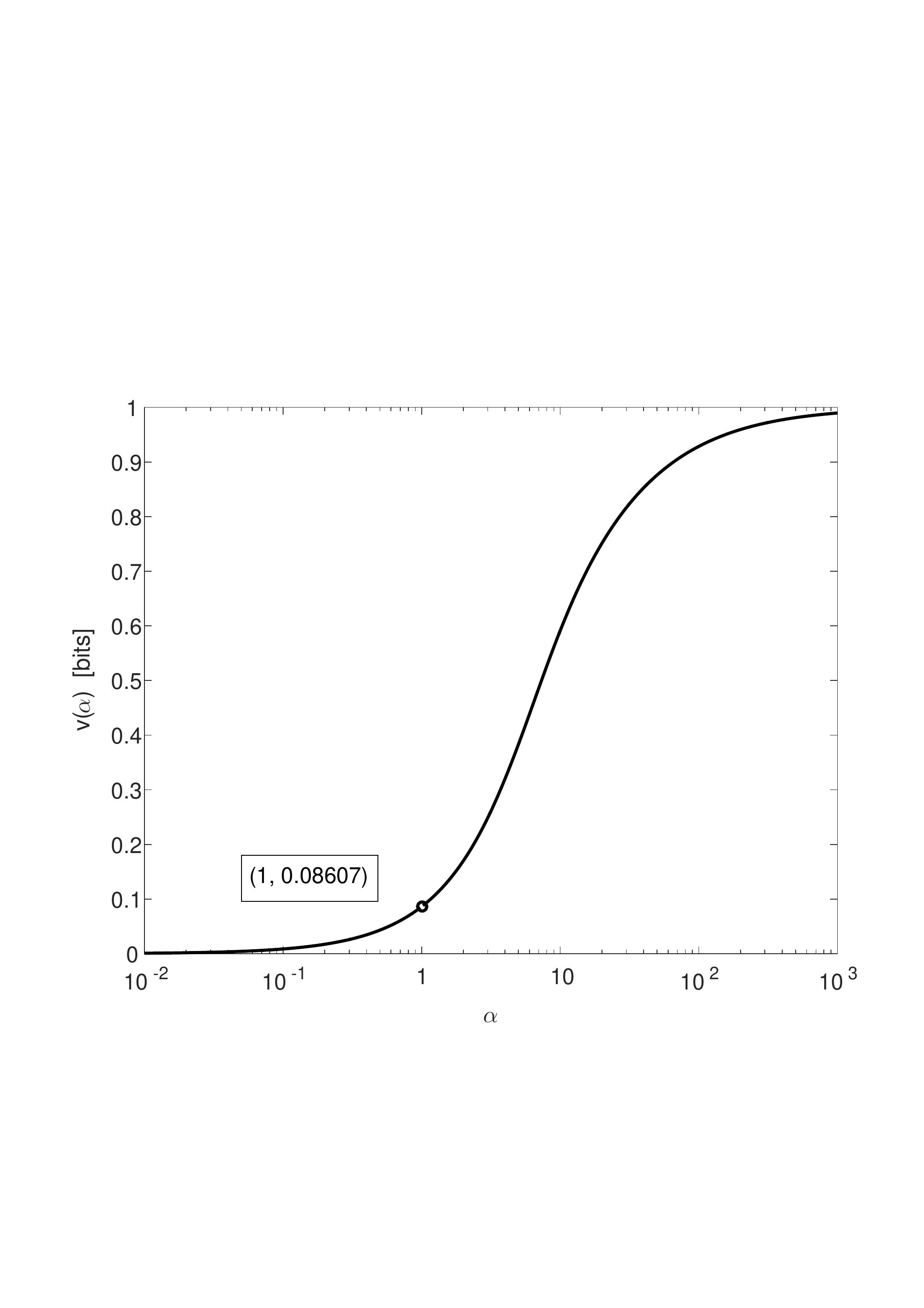}}
\vspace*{-3.6cm}
\caption{\label{figure:v_plot}
A plot of $v \colon (0, \infty) \to (0, \log 2)$, which is monotonically increasing,
continuous, and satisfying~\eqref{limits v}.}
\end{figure}

Combining \eqref{Arikan96-LB1-UB1} and \eqref{range of RE}
yields, for all $\rho > 0$,
\begin{subequations}\label{eq:bounds_Y^n}
\begin{align}
& \bigl( 1 + n \ln m \bigr)^{-\rho} \;
\exp\left( n \left[\rho H_{\frac1{1+\rho}}(\widetilde{X}_m)
- \rho \, v\Bigl(\frac1{1+\rho}\Bigr) \right] \right) \nonumber \\
& \quad \leq \expectation\bigl[g_{Y^n}^{\rho}(Y^{\ast\,n})\bigr] \label{LB-Y^n} \\
\label{UB-Y^n}
& \quad \leq \exp\left( n \rho H_{\frac1{1+\rho}}(\widetilde{X}_m) \right),
\end{align}
\end{subequations}
where, due to \eqref{beta^ast}, the difference between the
exponential growth rates (in $n$) of the lower and upper
bounds in \eqref{eq:bounds_Y^n} is equal to $\rho \, v\bigl(\frac1{1+\rho}\bigr)$,
and it can be verified to satisfy (see \eqref{beta^ast} and~\eqref{v function})
\begin{align} \label{diff}
0 < \rho \, v\Bigl(\frac1{1+\rho}\Bigr) < \beta^\ast \approx 0.08607 \, \log 2, \quad \rho > 0,
\end{align}
where the leftmost and rightmost inequalities of \eqref{diff} are asymptotically tight
as we let $\rho \to 0^{+}$ and $\rho \to \infty$, respectively.

By inserting \eqref{eq:bounds_Y^n} into \eqref{eq:raw_bounds_on_moment} and applying
Inequality \eqref{diff}, it follows that for all $\rho > 0$,
\begin{subequations}\label{eq:bounds-G-X^n}
\begin{align}
& \frac{k_1(\rho)}{\bigl( 1 + n \ln |\set{X}| \bigr)^{\rho}} \, \left[ \,
\exp\left( n \left[\rho H_{\frac1{1+\rho}}(\widetilde{X}_m) -
\beta^\ast \right] \right) +
\exp\Bigl( n \rho \, H_{\frac1{1+\rho}}\bigl(X \, | \, f_m^\ast(X) \bigr) \Bigr)
\right] \nonumber \\[0.1cm]
\label{LB-G-X^n}
& \quad \leq \expectation \bigl[ G_2^{\rho}(X^n) \bigr] \\
\label{UB-G-X^n}
& \quad \leq k_2(\rho) \left[ \exp\left( n \rho H_{\frac1{1+\rho}}(\widetilde{X}_m) \right) +
\exp\Bigl( n \rho \, H_{\frac1{1+\rho}}\bigl(X \, | \, f_m^\ast(X) \bigr) \Bigr) \right].
\end{align}
\end{subequations}

Consequently, by letting $n$ tend to infinity and relying on \eqref{eq: 240220202b},
\begin{subequations}\label{eq:bounds_on_E_2}
\begin{align}
& \max \left\{ \rho H_{\frac1{1+\rho}}(\widetilde{X}_m) -
\beta^\ast, \; \rho \, H_{\frac1{1+\rho}}\bigl(X \, | \, f_m^\ast(X) \bigr) \right\} \nonumber \\
& \quad \leq E_2(X; \rho, m, f_m^\ast) \label{24022020c} \\
& \quad \leq \max \left\{ \rho H_{\frac1{1+\rho}}(\widetilde{X}_m), \;
\rho H_{\frac1{1+\rho}}\bigl(X \, | \, f_m^\ast(X) \bigr) \right\}. \label{24022020d}
\end{align}
\end{subequations}

We next simplify the above bounds by evaluating the maxima in \eqref{eq:bounds_on_E_2}
as a function $m$ and $\rho$. To that end, we use the following lemma.
\begin{lemma} \label{lemma2}
For $\alpha > 0$, let the two sequences $\{a_m(\alpha)\}$
and $\{b_m(\alpha)\}$ be given by
\begin{align}
& a_m(\alpha) := H_{\alpha}\bigl( \widetilde{X}_m \bigr), \label{a sequence} \\
& b_m(\alpha) := H_{\alpha}\bigl( X \, | \, f_m^\ast(X) \bigr), \label{b sequence}
\end{align}
with $m \in \OneTo{|\set{X}|}$. Then,
\begin{enumerate}[a)]
\item The sequence $\{a_m(\alpha)\}$ is monotonically increasing (in $m$), and its first and
last terms are zero and $H_\alpha(X)$, respectively.
\item The sequence $\{b_m(\alpha)\}$ is monotonically decreasing (in $m$), and its first and
last terms are $H_\alpha(X)$ and zero, respectively.
\item If $\supp P_X = \set{X}$, then $\{a_m(\alpha)\}$ is strictly
monotonically increasing, and $\{b_m(\alpha)\}$ is strictly monotonically decreasing.
In particular, for all $m \in \FromTo{2}{|\set{X}|-1}$, $a_m(\alpha)$ and
$b_m(\alpha)$ are positive and strictly smaller than $H_{\alpha}(X)$.
\end{enumerate}
\end{lemma}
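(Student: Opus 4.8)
The plan is to handle the two sequences separately: for $\{a_m(\alpha)\}$ I would reduce the monotonicity to a majorization ordering of the distributions $R_m(P_X)$ and invoke Schur-concavity, while for $\{b_m(\alpha)\}$ I would use the nested structure of the Huffman merges together with the monotonicity of the Arimoto--R\'enyi conditional entropy under additional conditioning. The boundary values are immediate from the constructions: Algorithm~3 makes $R_1(P_X)$ a unit mass and $R_{|\set{X}|}(P_X) = P_X$, so $a_1(\alpha) = 0$ and $a_{|\set{X}|}(\alpha) = H_\alpha(X)$; Algorithm~2 makes $f_1^\ast$ constant and $f_{|\set{X}|}^\ast$ a bijection, so $b_1(\alpha) = H_\alpha(X \mid \mathrm{const}) = H_\alpha(X)$ and $b_{|\set{X}|}(\alpha) = H_\alpha(X \mid X) = 0$, as claimed for the first and last terms.

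For part~(a), I would prove that consecutive members are ordered by majorization, namely $R_{m+1}(P_X) \prec R_m(P_X)$, and then apply Proposition~\ref{proposition: Renyi entropy is Schur-concave} to get $a_{m+1}(\alpha) = H_\alpha\bigl(R_{m+1}(P_X)\bigr) \ge H_\alpha\bigl(R_m(P_X)\bigr) = a_m(\alpha)$. The cleanest route to the majorization relation is the characterization underlying the construction in \cite{CicaleseGV18,CicaleseGV19}, that $R_m(P_X)$ is the majorization-least element of $\set{M}_m := \{Q \in \set{P}_m : P_X \prec Q\}$. Since any $Q \in \set{P}_m$ may be zero-padded to a member of $\set{P}_{m+1}$ without changing its partial sums $G_Q(\cdot)$, we have $\set{M}_m \subseteq \set{M}_{m+1}$; because $R_m(P_X) \in \set{M}_m \subseteq \set{M}_{m+1}$ and $R_{m+1}(P_X)$ is least in $\set{M}_{m+1}$, the relation $R_{m+1}(P_X) \prec R_m(P_X)$ follows. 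A self-contained alternative is to verify this relation directly from Definition~\ref{definition: majorization} by comparing the partial sums $G_{R_m(P_X)}(k)$ obtained from \eqref{PMF Q}; I expect this to be the main obstacle, since it forces a case analysis over the uniform regime $P_X(1) < \tfrac1m$ versus \eqref{PMF Q}, and over how the threshold $m^\ast$ of \eqref{max integer} moves as $m \mapsto m+1$.

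For part~(b), I would exploit that producing $f_m^\ast$ in Algorithm~2 takes exactly one Huffman merge beyond producing $f_{m+1}^\ast$, so the partition of $\set{X}$ induced by $f_m^\ast$ coarsens the one induced by $f_{m+1}^\ast$; equivalently there is a map $g$ with $f_m^\ast = g \circ f_{m+1}^\ast$. Thus $f_m^\ast(X)$ is a deterministic function of $Z := f_{m+1}^\ast(X)$, and the monotonicity of the Arimoto--R\'enyi conditional entropy under further conditioning \cite{FehrB14,SasonV18a}, namely $H_\alpha(X \mid Z) = H_\alpha\bigl(X \mid Z, g(Z)\bigr) \le H_\alpha\bigl(X \mid g(Z)\bigr)$, gives $b_{m+1}(\alpha) \le b_m(\alpha)$.

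For part~(c), assume $\supp P_X = \set{X}$. Then \eqref{PMF Q} (and the uniform case) show that $R_m(P_X)$ has support of size exactly $m$, so $R_{m+1}(P_X)$ and $R_m(P_X)$ have different support sizes and are not permutations of one another; the majorization in part~(a) is therefore strict, and the strict Schur-concavity of $H_\alpha$ (for finite $\alpha > 0$) upgrades the inequality to $a_m(\alpha) < a_{m+1}(\alpha)$. For $\{b_m(\alpha)\}$, full support forces the extra merge taking $f_{m+1}^\ast$ to $f_m^\ast$ to combine two nonempty classes, so the refinement genuinely resolves uncertainty about $X$ and the conditioning inequality becomes strict, giving $b_{m+1}(\alpha) < b_m(\alpha)$. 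Chaining these strict inequalities from the two endpoints then yields the final assertion that $a_m(\alpha)$ and $b_m(\alpha)$ are positive and strictly below $H_\alpha(X)$ for every $m \in \FromTo{2}{|\set{X}|-1}$.
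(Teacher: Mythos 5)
Your proposal is correct and follows essentially the same route as the paper: boundary values read off from Algorithms~2 and~3, the extremal (majorization-minimal) property of $R_m(P_X)$ from \cite{CicaleseGV18,Sason18b} combined with Schur-concavity for $\{a_m(\alpha)\}$, the data-processing/conditioning monotonicity of the Arimoto--R\'enyi conditional entropy for $\{b_m(\alpha)\}$, and strictness under full support via strict Schur-concavity and the fact that the extra Huffman merge combines two nonempty classes. The only cosmetic difference is that the paper phrases part~(a) through the variational identity $a_m(\alpha)=\max_{Q\in\set{P}_m:\,P_X\prec Q}H_\alpha(Q)$ and the inclusion $\set{P}_m\subset\set{P}_{m+1}$, whereas you first extract the relation $R_{m+1}(P_X)\prec R_m(P_X)$ and then apply Schur-concavity---two equivalent readings of the same fact, so the direct case analysis of \eqref{PMF Q} you flag as a potential obstacle is indeed unnecessary.
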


\begin{proof}
See Appendix~\ref{appendix: proof of Lemma 2}.
\end{proof}

Since symbols of probability zero (i.e., $x \in \set{X}$ for which $P_X(x) = 0$) do not
contribute to the expected number of guesses, assume without loss of generality that
$\supp P_X = \set{X}$. In view of Lemma~\ref{lemma2}, we can therefore define
\begin{align} \label{M_rho}
m_\rho^\ast &= m_\rho^\ast(P_X)
:= \min \biggl\{ \, m \in \FromTo{2}{|\set{X}|}:
\; a_m\biggl(\frac1{1+\rho}\biggr) \geq
b_m\biggl(\frac1{1+\rho}\biggr) \biggr\}.
\end{align}
Using \eqref{M_rho}, we simplify \eqref{eq:bounds_on_E_2} as follows:
\begin{enumerate}[a)]
\item
If $m < m_\rho^\ast$, then
\begin{align} \label{max1}
E_2(\rho,m) = \rho \, H_{\frac1{1+\rho}}\bigl(X \, | \, f_m^\ast(X) \bigr).
\end{align}
\item
Otherwise, if $m \geq m_\rho^\ast$, then
\begin{align}
\rho H_{\frac1{1+\rho}}(\widetilde{X}_m) - \beta^\ast
\leq E_2(\rho, m) \leq \rho H_{\frac1{1+\rho}}(\widetilde{X}_m). \label{24022020e}
\end{align}
\end{enumerate}

Note that when guessing $X^n$ directly, the $\rho$-th moment of the
number of guesses grows exponentially with rate $\rho H_{\frac1{1+\rho}}(X)$
(cf. \eqref{Arikan96-LB1-UB1}), due to Item~c) in Lemma~\ref{lemma2}, since
\begin{equation}
H_{\frac1{1+\rho}}(\widetilde{X}_m) = a_m\biggl(\frac1{1+\rho}\biggr)
< a_{|\set{X}|}\biggl(\frac1{1+\rho}\biggr) = H_{\frac1{1+\rho}}(X),
\end{equation}
and also because conditioning (on a dependent chance variable) strictly reduces the
R\'{e}nyi entropy \cite{FehrB14}. Eqs.~\eqref{max1} and \eqref{24022020e} imply that for any
$m \in \FromTo{2}{|\set{X}| - 1}$, guessing in two stages according to
Algorithm 1 with $f = f_m^\ast$ reveals $X^n$ sooner (in expectation) than
guessing $X^n$~directly.

\vspace*{0.2cm}
We summarize our findings in this section in the second theorem below.


\begin{theorem} \label{theorem2: 2-stage guessing}
For a given PMF $P_X$ of support $\set{X} = \OneTo{|\set{X}|}$ and $m \in
\FromTo{2}{|\set{X}| - 1}$, let the function $f_m^\ast \in \set{F}_{|\set{X}|, m}$
be constructed according to Algorithm~2, and let the random variable $\widetilde{X}_m$
be constructed according to the PMF in Algorithm~3.
Let $X^n$ be i.i.d. according to $P_X$, and let $Y^n := (f_m^\ast(X_1), \ldots,
f_m^\ast(X_n))$. Finally, for $\rho > 0$, let
\begin{align} \label{E1}
E_1(\rho) := \rho \, H_{\frac1{1+\rho}}(X)
\end{align}
be the optimal exponential growth rate of the $\rho$-th moment single-stage guessing
of $X^n$, and let
\begin{align} \label{E2}
E_2(\rho, m) := E_2(X; \rho, m, f_m^\ast)
\end{align}
be given by \eqref{eq: 11022021a} with $f := f_m^\ast$.
Then, the following holds:
\begin{enumerate}[a)]
\item \cite{CicaleseGV18}:
The maximization of the (normalized) mutual information $\tfrac1n I(X^n; Y^n)$ over
all the deterministic functions $f \colon \set{X} \to \OneTo{m}$ is a strongly NP-hard
problem (for all $n$). However, the deterministic function $f := f_m^\ast$
almost achieves this maximization up to a small additive term, which is equal to
$\beta^\ast := \log\left(\frac{2}{\mathrm{e} \, \ln 2}\right) \approx 0.08607 \, \log 2$,
and also up to a multiplicative term, which is equal to $\tfrac{10}{11}$ (see
\eqref{MI1}--\eqref{MI2}).
\item The $\rho$-th moment of the number of guesses for $X^n$, which is required by the
two-stage guessing in Algorithm~1, satisfies the non-asymptotic bounds
in \eqref{LB-G-X^n}--\eqref{UB-G-X^n}.
\item The asymptotic exponent $E_2(\rho, m)$ satisfies \eqref{max1}--\eqref{24022020e}.
\item For all $\rho > 0$,
\begin{align}
E_1(\rho) - E_2(\rho, m)
\geq \rho \left[ H_{\frac1{1+\rho}}(X) - \max \left\{ H_{\frac1{1+\rho}}(\widetilde{X}_m), \;
H_{\frac1{1+\rho}}\bigl(X \, | \, f_m^\ast(X) \bigr) \right\} \right] > 0,
\end{align}
so there is a reduction in the exponential growth rate (as a function of $n$)
of the required number of guesses for $X^n$ by Algorithm~1 (in comparison to the
optimal one-stage guessing).
\end{enumerate}
\end{theorem}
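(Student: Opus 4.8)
The plan is to assemble the four assertions from the material already developed in this section; none of them requires a genuinely new argument, only careful bookkeeping of which previously derived relation supplies each claim. I would organize the proof part by part and isolate at the end the single quantitative point, namely the strict inequality in item~d).

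For item~a), the strong NP-hardness of maximizing $\tfrac1n I(X^n;Y^n)$ over $f\in\set{F}_{|\set{X}|,m}$ is imported verbatim from \cite{CicaleseGV18} (their Lemma~1), while the near-optimality of the Huffman-based candidate $f_m^\ast$ is exactly the chain \eqref{MI1} together with the multiplicative refinement \eqref{MI2}; thus the additive gap is $\beta^\ast$ and the multiplicative gap is $\tfrac{10}{11}$. For item~b), I would recall that combining Arikan's Stage-1 estimate \eqref{Arikan96-LB1-UB1} with the sandwich \eqref{range of RE} relating $H_{\frac1{1+\rho}}(Y^\ast)$ to $H_{\frac1{1+\rho}}(\widetilde{X}_m)$ yields the two-sided bound \eqref{eq:bounds_Y^n}; substituting this into the generic estimate \eqref{eq:raw_bounds_on_moment} and invoking the gap inequality \eqref{diff} produces precisely \eqref{LB-G-X^n}--\eqref{UB-G-X^n}.

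For item~c), I would pass to the limit $n\to\infty$ in \eqref{LB-G-X^n}--\eqref{UB-G-X^n}, using that the pre-exponential factors $k_1(\rho)(1+n\ln|\set{X}|)^{-\rho}$ and $k_2(\rho)$ are sub-exponential, exactly as in the derivation of \eqref{eq: 240220202b}; this gives the two-sided bound \eqref{eq:bounds_on_E_2} on $E_2(\rho,m)$. I would then evaluate the two maxima via Lemma~\ref{lemma2}: writing $a_m=H_{\frac1{1+\rho}}(\widetilde{X}_m)$ and $b_m=H_{\frac1{1+\rho}}(X\,|\,f_m^\ast(X))$ with $\alpha=\frac1{1+\rho}$, the definition \eqref{M_rho} of $m_\rho^\ast$ splits the analysis into two cases. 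When $m<m_\rho^\ast$ we have $a_m<b_m$, so both the lower and the upper maxima collapse to $\rho b_m$, forcing the equality \eqref{max1}; when $m\geq m_\rho^\ast$ we have $a_m\geq b_m$, the upper max equals $\rho a_m$ while the lower bound is at least $\rho a_m-\beta^\ast$, which is \eqref{24022020e}.

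For item~d), I would start from the upper bound \eqref{24022020d}, giving $E_2(\rho,m)\leq\rho\max\{a_m,b_m\}$, and combine it with $E_1(\rho)=\rho H_{\frac1{1+\rho}}(X)$ from \eqref{E1} to obtain the displayed lower estimate on $E_1(\rho)-E_2(\rho,m)$. The one point that demands an honest argument is the strict positivity of the bracket. Here I would first invoke the full-support reduction (symbols of zero probability may be discarded) so that Lemma~\ref{lemma2}c) applies, yielding $a_m<H_{\frac1{1+\rho}}(X)$ strictly for every $m\in\FromTo{2}{|\set{X}|-1}$; separately I would use that conditioning on a genuinely dependent chance variable strictly lowers the Arimoto--R\'{e}nyi conditional entropy \cite{FehrB14}, yielding $b_m<H_{\frac1{1+\rho}}(X)$. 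Since both quantities are strictly below $H_{\frac1{1+\rho}}(X)$, so is their maximum, and the bracket is strictly positive. I expect this strictness step to be the main obstacle, not because it is deep but because it is the only place where one must verify the precise hypotheses (full support, and genuine dependence of $X$ on $f_m^\ast(X)$ for $m<|\set{X}|$) under which the monotonicity and data-processing facts become strict rather than merely weak.
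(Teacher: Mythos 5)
Your proposal is correct and follows essentially the same route as the paper: Theorem~\ref{theorem2: 2-stage guessing} is a summary statement, and the paper proves it exactly as you do, by citing \cite{CicaleseGV18} and \eqref{MI1}--\eqref{MI2} for part~a), deriving \eqref{LB-G-X^n}--\eqref{UB-G-X^n} from \eqref{eq:bounds_Y^n}, \eqref{eq:raw_bounds_on_moment} and \eqref{diff} for part~b), passing to the limit and splitting on $m_\rho^\ast$ via Lemma~\ref{lemma2} for part~c), and obtaining strict positivity in part~d) from the full-support reduction together with Lemma~\ref{lemma2}c) and the strict entropy reduction under conditioning from \cite{FehrB14}. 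Your identification of the strictness step in d) as the only point requiring care matches where the paper places its justification.
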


\section{Two-Stage Guessing: Arbitrary $(X, Y)$}
\label{section: Robert's part}

We next assume that $X^n$ and $Y^n$ are drawn jointly i.i.d. according to a given PMF
$P_{XY}$, and we drop the requirement that Stage~1 need reveal $Y^n$ prior to guessing $X^n$.
Given $\rho > 0$, our goal in this section is to find the least exponential growth rate (in $n$)
of the $\rho$-th moment of the total number of guesses required to recover $X^n$. Since Stage~1
may not reveal $Y^n$, we can no longer express the number of guesses using the ranking functions
$g_{Y^n}(\cdot)$ and $g_{X^n | Y^n}(\cdot | \cdot)$ as in Section~\ref{section: guess Y^n first},
and need new notation to capture the event that $Y^n$ was not guessed in Stage~1.
To that end, let $\set{G}_n$ be a subset of $\set{Y}^n$, and let the ranking function
\begin{equation} \label{eq:tilde_g1}
 \tilde{g}_{Y^n}\colon \set{G}_n \to \OneTo{|\set{G}_n|}
\end{equation}
denote the guessing order in Stage~1 with the understanding that if $Y^n \notin \set{G}_n$, then
\begin{equation} \label{eq:not in G_n}
 \tilde{g}_{Y^n}(Y^n) = |\set{G}_n|
\end{equation}
and the guesser moves on to Stage~2 knowing only that $Y^n \notin \set{G}_n$. We denote the guessing order in Stage~2 by
\begin{equation}\label{eq:tilde_g2}
 \tilde{g}_{X^n | Y^n} \colon \set{X}^n \times \set{Y}^n \to \OneTo{|\set{X}|^n},
\end{equation}
where, for every $y^n \in \set{Y}^n$, $\tilde{g}_{X^n | Y^n}(\cdot | y^n)$ is a ranking function on $\OneTo{|\set{X}|^n}$ that satisfies
\begin{equation}
 \tilde{g}_{X^n | Y^n}(\cdot | y^n) = \tilde{g}_{X^n | Y^n}(\cdot | \eta^n), \quad \forall\, y^n\!, \eta^n \notin \set{G}_n.
\end{equation}

Note that while $ \tilde{g}_{Y^n}$ and $\tilde{g}_{X^n | Y^n}$ depend on $\set{G}_n$, we do not make this dependence explicit.
In the remainder of this section, we prove the following variational characterization of the least exponential growth rate of
the $\rho$-th moment of the total number of guesses in both stages
$\tilde{g}_{Y^n}(Y^n) + \tilde{g}_{X^n | Y^n}(X^n | Y^n)$.
\begin{theorem}\label{thm:main_robert}
 If $(X^n, Y^n)$ are i.i.d. according to $P_{XY}$, then for all $\rho > 0$
 \begin{align}
 & \hspace*{-0.5cm} \lim_{n \to \infty} \min_{\tilde{g}_{Y^n}, \tilde{g}_{X^n | Y^n}}
 \frac{1}{n}\log\expectation\big[\big(\tilde{g}_{Y^n}(Y^n) + \tilde{g}_{X^n | Y^n}(X^n | Y^n)\big)^\rho\big] \nonumber \\
 &= \sup_{Q_{XY}} \Big(\rho\min\big\{H(Q_X), \max\big\{H(Q_Y), H(Q_{X \mid Y})\big\}\big\} - D(Q_{XY} || P_{XY})\Big),\label{eq:target_identity}
 \end{align}
 where the supremum on the RHS of \eqref{eq:target_identity} is over all PMFs $Q_{XY}$ on $\set{X} \times \set{Y}$ (and the limit exists).
\end{theorem}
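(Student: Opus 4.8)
The plan is to establish two matching bounds---a converse ($\liminf$ is at least the RHS of \eqref{eq:target_identity}) and an achievability bound ($\limsup$ is at most the RHS)---from which both the existence of the limit and its value follow. I work throughout with the method of types (\cite{Cover_Thomas}, Ch.~11). As a first reduction, write $N_1:=\tilde{g}_{Y^n}(Y^n)$ and $N_2:=\tilde{g}_{X^n|Y^n}(X^n|Y^n)$; by Lemma~\ref{lemma1} with $k=2$ the quantities $(N_1+N_2)^\rho$, $N_1^\rho+N_2^\rho$, and $\max\{N_1,N_2\}^\rho$ all agree up to factors depending only on $\rho$, so it suffices to analyze the exponential growth rate of $\expectation[\max\{N_1,N_2\}^\rho]$. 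For a target $(x^n,y^n)$ of joint type $Q_{XY}$ there are two modes. If $y^n\in\set{G}_n$, Stage~1 reveals $Y^n$ and, guessing in decreasing order of $P_{Y^n}$ and then of $P_{X^n|Y^n}(\cdot|y^n)$, the cost exponent is $\max\{H(Q_Y),H(Q_{X\mid Y})\}$. If $y^n\notin\set{G}_n$, then $N_1=|\set{G}_n|$ and Stage~2 guesses $X^n$ in a single fixed order (optimally, decreasing $P_{X^n}$), with cost exponent $\max\{\tfrac1n\log|\set{G}_n|,\,H(Q_X)\}$. Using the elementary bound $H(Q_{X\mid Y})\le H(Q_X)$, the per-type optimal choice between the two modes is exactly $\Phi(Q_{XY}):=\min\{H(Q_X),\max\{H(Q_Y),H(Q_{X\mid Y})\}\}$: when $H(Q_Y)\le H(Q_X)$ one should guess $Y^n$ first, and otherwise abandon Stage~1. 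This identifies the RHS of \eqref{eq:target_identity} as $\sup_{Q_{XY}}(\rho\,\Phi(Q_{XY})-D(Q_{XY}\|P_{XY}))$.

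For the converse I fix an arbitrary admissible triple $(\set{G}_n,\tilde{g}_{Y^n},\tilde{g}_{X^n|Y^n})$ and let $Q^\ast$ attain the supremum. The only combinatorial input is that for any ranking function on a finite set and any subset $T$, the ranks assigned to the elements of $T$ are distinct positive integers, whence $\sum_{u\in T}\mathrm{rank}(u)^\rho\ge\sum_{j=1}^{|T|}j^\rho\ge(1+\rho)^{-1}|T|^{1+\rho}$; this is Arikan's converse mechanism \cite{Arikan96}. Applying it to the appropriate type class of size $\doteq\exp(nH(\cdot))$ and weighting by the (constant) type probability $\doteq\exp(-n(H+D))$ lower bounds a guessing moment by $\exp(n(\rho H-D))$. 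I then lower bound $\expectation[\max\{N_1,N_2\}^\rho]$ for the type-$Q^\ast$ targets, splitting on whether $y^n\in\set{G}_n$. In the exclude-optimal regime ($H(Q_Y^\ast)>H(Q_X^\ast)$, $\Phi=H(Q_X^\ast)$), abandoned $y^n$ force $N_2$ through the fixed Stage~2 order over the $X$-type class, giving exponent $H(Q_X^\ast)$, while included $y^n$ force $N_1$ through the $Y$-type class, giving the larger exponent $H(Q_Y^\ast)$. In the include-optimal regime ($H(Q_Y^\ast)\le H(Q_X^\ast)$), $N_1$ over the $Y$-type class and $N_2$ over the conditional type class together give $\max\{H(Q_Y^\ast),H(Q_{X\mid Y}^\ast)\}$. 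In either case the exponent is at least $\Phi(Q^\ast)$, so $\liminf\ge\rho\,\Phi(Q^\ast)-D(Q^\ast\|P)$, which is the RHS.

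For achievability I propose a threshold construction indexed by a single rate $\gamma\ge0$: let $\set{G}_n$ be the union of all $Y$-type classes with $H(Q_Y)\le\gamma$ (equivalently, the $\doteq\exp(n\gamma)$ most probable $y^n$), so that $\tfrac1n\log|\set{G}_n|\to\gamma$, and guess in decreasing probability in each stage. Summing the type contributions and letting $n\to\infty$, the exponent of the resulting strategy is $\max\{A(\gamma),B(\gamma)\}$, where $A(\gamma):=\sup_{Q:\,H(Q_Y)\le\gamma}\big(\rho\max\{H(Q_Y),H(Q_{X\mid Y})\}-D(Q\|P)\big)$ is the contribution of the sequences guessed first and $B(\gamma):=\sup_{Q:\,H(Q_Y)>\gamma}\big(\rho\max\{\gamma,H(Q_X)\}-D(Q\|P)\big)$ that of the sequences for which Stage~1 is abandoned. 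It then remains to optimize the threshold and prove $\inf_{\gamma\ge0}\max\{A(\gamma),B(\gamma)\}$ equals the RHS of \eqref{eq:target_identity}.

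The hard part is precisely this minimax, and the obstacle is structural: $\set{G}_n$ may depend on $y^n$ only, whereas the per-type optimal mode depends on $x^n$ through the comparison of $H(Q_Y)$ with $H(Q_X)$, and the global penalty $\tfrac1n\log|\set{G}_n|\to\gamma$ paid on the abandon event couples all these binary decisions into a single scalar trade-off. Thus every fixed $\gamma$ assigns the ``wrong'' mode to some joint types, and one must check that their over-paid contributions never exceed the target value. I would resolve this by noting that $A(\gamma)$ is non-decreasing in $\gamma$ while $B(\gamma)$ balances the growing Stage-1 penalty $\rho\gamma$ against the shrinking set of abandoned types; both are continuous (suprema of continuous type functionals over the compact simplex of PMFs), so $\inf_\gamma\max\{A(\gamma),B(\gamma)\}$ is attained at a balancing threshold $\gamma^\ast$ with $A(\gamma^\ast)=B(\gamma^\ast)$. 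The per-type identity for $\Phi$ from the first paragraph, together with the converse (which already yields $\max\{A(\gamma),B(\gamma)\}\ge$ RHS for every $\gamma$), pins this balancing value to the RHS of \eqref{eq:target_identity}. Matching the achievability and converse bounds then shows that the limit exists and equals \eqref{eq:target_identity}.
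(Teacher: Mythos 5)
Your converse is essentially the paper's: restricting to the optimizing type, splitting on whether $y^n\in\set{G}_n$, and using Arikan's rank-sum bound together with an argument that abandoned targets force Stage~2 through essentially all of $\tcl{n}(Q_X)$ is exactly the mechanism of the paper's Lemma~\ref{lmm:insufficient_guesses} and the dichotomy on $\alpha=\lim|\set{G}_{n_k}|/|\tcl{n_k}(Q_Y)|$ (your last claim in that direction still needs the entropy/indicator argument showing that conditioning on the abandon event does not materially reduce $H(X^n)$, but the outline is right). The achievability direction, however, has a genuine gap, and you have correctly located it yourself: the per-type optimal mode depends on the \emph{joint} type of $(x^n,y^n)$, while $\set{G}_n$ may depend only on $y^n$. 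Your proposed fix --- a single threshold $\gamma$ on $H(Q_Y)$ followed by a balancing argument $A(\gamma^\ast)=B(\gamma^\ast)$ --- is not a proof. The existence of a balancing $\gamma^\ast$ says nothing about whether the balanced value equals the right-hand side of \eqref{eq:target_identity}; the inequality $\max\{A(\gamma),B(\gamma)\}\geq \mathrm{RHS}$ for every $\gamma$ (which you correctly note) only shows your construction is never \emph{better} than the target, not that some $\gamma$ attains it. Worse, the threshold family is structurally too coarse: two joint types sharing the same $Y$-marginal can prefer opposite modes (one with $H(Q_X)<H(Q_Y)$ wanting to skip Stage~1, one with $H(Q_X)>H(Q_Y)$ wanting to be guessed), and no choice of $\gamma$ --- indeed no subset $\set{G}_n$ defined through $y^n$ alone --- can serve both simultaneously. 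Whether the over-paid contributions are always dominated by the overall supremum is precisely the unproved claim, and nothing in your sketch addresses it.

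The paper's route around this obstacle is the ingredient your proposal is missing: a ``cost of side information'' reduction (Lemma~\ref{lmm:cost_of_si} and Corollary~\ref{cor:type_given}, via Moser's interleaving argument) showing that a guesser cognizant of the empirical joint type $\Pi_{X^nY^n}$ outperforms the best type-ignorant guesser by at most a factor polynomial in $n$, which is immaterial to the exponent. This reduction decouples the include/skip decisions across joint types: combined with the law of total expectation and Sanov-type bounds on type probabilities (Lemma~\ref{lmm:expec_over_type}), the problem collapses to guessing a pair drawn uniformly from a single type class $\tcl{n}(Q_{XY})$, for which the optimum is simply the better of ``guess $Y^n$ in full'' and ``skip Stage~1,'' yielding $\rho\min\{H(Q_X),\max\{H(Q_Y),H(Q_{X\mid Y})\}\}$ per type with no cross-type trade-off to resolve. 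Without this reduction (or a complete analysis of the minimax $\inf_\gamma\max\{A(\gamma),B(\gamma)\}$, which you have not supplied and which I do not believe goes through as stated), your achievability argument does not close.
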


Note that if $P_{XY}$ is such that $Y = f(X)$, then the RHS of \eqref{eq:target_identity} is less than or equal to the RHS of \eqref{eq:dominated_by_larger_exponent}. In other words, the guessing exponent of Theorem \ref{thm:main_robert} is less than or equal to the guessing exponent of Theorem \ref{theorem1: 2-stage guessing}. This is due to the fact that guessing $Y^n$ in Stage 1 before proceeding to Stage 2 (the strategy examined in Section \ref{section: guess Y^n first}) is just one of the admissible guessing strategies of Section \ref{section: Robert's part} and not necessarily the optimal one.

We prove Theorem \ref{thm:main_robert} in two parts: First, we show that the guesser can be assumed cognizant
of the empirical joint type of $(X^n, Y^n)$; by invoking the law of total expectation, averaging over denominator-$n$ types
$Q_{XY}$ on $\set{X} \times \set{Y}$, we reduce the problem to evaluating the LHS of \eqref{eq:target_identity} under the
assumption that $(X^n, Y^n)$ is drawn uniformly at random from a type class $\tcl{n}(Q_{XY})$ (instead of being i.i.d. $P_{XY}$). We
conclude the proof by solving this reduced problem, showing in particular that when $(X^n, Y^n)$ is drawn uniformly at random from a
type class, the LHS of \eqref{eq:target_identity} can be achieved either by guessing $Y^n$ in Stage~1 or skipping Stage 1 entirely.

We begin with the first part of the proof and show that the guesser can be assumed cognizant of the empirical joint type
of $(X^n, Y^n)$; we formalize and prove this claim in Corollary~\ref{cor:type_given}, which we derive from Lemma~\ref{lmm:cost_of_si}
below.

\begin{lemma}\label{lmm:cost_of_si}
 Let $\tilde{g}_{Y^n}^*$ and $\tilde{g}_{X^n | Y^n}^*$ be ranking functions that minimize the expectation in the LHS of~\eqref{eq:target_identity}, and likewise, let $\tilde{g}_{T; Y^n}^{*}$ and $\tilde{g}_{T; X^n | Y^n}^*$ be ranking functions
 cognizant of the empirical joint type $\Pi_{X^nY^n}$ of $(X^n, Y^n)$ that minimize the expectation in the LHS of
 \eqref{eq:target_identity} over all ranking functions depending on $\Pi_{X^nY^n}$. Then, there exist
 positive constants $a$ and $k$, which are independent of $n$, such~that
 \begin{equation}\label{eq:ub_si}
 \expectation\big[\big(\tilde{g}_{Y^n}^*(Y^n) + \tilde{g}_{X^n | Y^n}^*(X^n | Y^n)\big)^\rho\big]
 \leq \expectation\big[\big(\tilde{g}_{T; Y^n}^*(Y^n) + \tilde{g}_{T; X^n | Y^n}^*(X^n | Y^n)\big)^\rho\big] \, k n^a.
 \end{equation}
\end{lemma}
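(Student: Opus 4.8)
The plan is to upper bound the left-hand side of \eqref{eq:ub_si} by constructing a \emph{single} type-blind strategy out of the type-aware optima and then appealing to optimality. Since any type-blind strategy is a type-aware strategy that happens to ignore $\Pi_{X^nY^n}$, the pair $(\tilde g_{Y^n}^*,\tilde g_{X^n\mid Y^n}^*)$ is at least as good as whatever type-blind strategy I exhibit; it therefore suffices to build one type-blind strategy whose $\rho$-th guessing moment is at most $k\,n^a$ times $\expectation\big[\big(\tilde g_{T;Y^n}^*(Y^n)+\tilde g_{T;X^n\mid Y^n}^*(X^n\mid Y^n)\big)^\rho\big]$.

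The main device is the combination of a polynomial type count with a round-robin merge. The number of denominator-$n$ joint types on $\set{X}\times\set{Y}$ is at most $T_n:=(n+1)^{|\set{X}|\,|\set{Y}|}$, which grows only polynomially in $n$. For every joint type $Q$, the type-aware optimum prescribes a Stage-1 ordering of its guess set $\mathcal{G}_n(Q)\subseteq\tcl{n}(Q_Y)$ and, for each Stage-1 outcome, a Stage-2 ordering of $\set{X}^n$. I would define the type-blind Stage-1 ordering as the round-robin interleaving (over all $T_n$ types, with repetitions deleted) of the per-type Stage-1 orderings, and define the type-blind Stage-2 orderings analogously. The elementary merging property is that an element occupying position $p$ in any one of the $T_n$ interleaved lists occupies position at most $T_n\,p$ in the merged list; consequently, for the realized type $Q^\ast=\Pi_{X^nY^n}$, the merged Stage-1 and Stage-2 guess counts are each at most $T_n$ times the corresponding type-aware counts. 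Bounding the $\rho$-th power of the two-term sum by Lemma~\ref{lemma1} and taking expectations then produces a multiplicative overhead $2^{\max\{\rho-1,0\}}T_n^{\rho}\le k\,n^{a}$ with $a=\rho\,|\set{X}|\,|\set{Y}|$, which is exactly \eqref{eq:ub_si}.

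The step I expect to be the main obstacle is reconciling the merge with the two-stage/skip structure, since the constructed strategy must itself be a valid two-stage strategy (all Stage-1 guesses before any Stage-2 guess) obeying the convention $\tilde g_{Y^n}(Y^n)=|\mathcal{G}_n|$ when $Y^n$ is never guessed. The dangerous case is when the type-aware optimum for the realized type $Q^\ast$ \emph{skips} Stage 1---paying only $|\mathcal{G}_n(Q^\ast)|$ there---while some other type buries the true $Y^n$ deep inside the merged Stage-1 set $\bigcup_Q\mathcal{G}_n(Q)$: a naive union can then force the type-blind guesser to spend exponentially more Stage-1 guesses than the genie's \emph{entire} budget, which would destroy the polynomial bound. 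To overcome this I would use that, conditioned on $\Pi_{X^nY^n}=Q$, the pair $(X^n,Y^n)$ is uniform on $\tcl{n}(Q)$, so that the genie's Stage-1 choice is effectively all-or-nothing at the granularity of a $Y$-type class $\tcl{n}(Q_Y)$ (guess the whole class before Stage 2, or skip it). Making the \emph{blind} skip decision at this same granularity---per $Y$-type class, using only the observable type of each candidate---lets me match the genie's guess-first-versus-skip choice for the realized type up to the polynomial number of extending joint types. The delicate point, and where I would spend the effort, is to verify that this per-class calibration never charges the type-blind guesser the full exponential size of a class that the realized type would have skipped, so that the Stage-1 overhead stays within a factor polynomial in $n$ of the genie's total cost.
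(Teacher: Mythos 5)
Your high-level plan---interleave the polynomially many type-cognizant strategies round-robin and absorb the loss into a factor $T_n^\rho=\mathrm{poly}(n)$, where $T_n:=|\typ{n}(\set{X}\times\set{Y})|$---is the same idea the paper relies on, except that the paper outsources the entire merging step to Moser's Corollary~6.10 (quoted as Corollary~\ref{corr:general_cost_of_si}) and itself only supplies the polynomial bound on the number of types. Where you go beyond the paper is in trying to actually carry out the merge for this particular two-stage structure, and there you correctly put your finger on the real difficulty: because of the convention \eqref{eq:not in G_n}, a blind guesser whose Stage-1 set is the union $\bigcup_Q\set{G}_n(Q)$ pays $|\set{G}_n|$ whenever $Y^n\notin\set{G}_n$, and this penalty can exceed the genie's \emph{entire} cost for the realized type by an exponential factor (for instance, when the realized, typical type is a ``skip'' type while some exponentially atypical type $Q'$ contributes $\tcl{n}(Q'_Y)$ of exponential size to the union). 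For the naive union the pointwise bound ``blind count $\le T_n\times$ genie count'' genuinely fails in Stage~1, and it cannot be rescued in expectation either, because the inflated penalty is weighted by the probability of the \emph{realized} type while the genie is only ever charged $|\set{G}_n(Q')|$ with the possibly exponentially smaller probability of $Q'$.

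The problem is that your proposed repair is a plan, not a proof. You suggest making the skip decision ``at the granularity of a $Y$-type class, using only the observable type of each candidate,'' but the blind guesser must commit to a single $\set{G}_n$ before anything is observed; a given $Y$-marginal extends to many joint types whose optimal guess-first-versus-skip decisions can conflict; and, most importantly, the cost of including a large class $\tcl{n}(\mu)$ (namely $|\set{G}_n|\ge|\tcl{n}(\mu)|$ in the not-found branch) is borne under \emph{every} realized type whose $Y$-marginal is excluded, whereas its benefit accrues only under the types that motivated its inclusion---so there is no per-type calibration to appeal to. You explicitly defer the verification that this construction ``never charges the type-blind guesser the full exponential size of a class that the realized type would have skipped,'' and that verification is precisely the substance of the lemma; as written, the argument stops exactly at its crucial step. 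To close the gap you would either have to invoke a black-box result of the Moser type, as the paper does, or exhibit a concrete rule for which $Y$-type classes enter $\set{G}_n$ and in what order, together with an expectation bound that compares the blind guesser's cost against the genie's \emph{total} expectation (summed over types) rather than against the genie's cost on the realized type.
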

\begin{proof}
 See Appendix \ref{app:a}.
\end{proof}

\begin{corollary}\label{cor:type_given}
If the limit
 \begin{equation}\label{eq:type_given}
 \lim_{n \to \infty} \frac{1}{n}\log\expectation\big[\big(\tilde{g}_{T; Y^n}^*(Y^n) + \tilde{g}_{T; X^n | Y^n}^*(X^n | Y^n)\big)^\rho\big], \quad \rho > 0,
 \end{equation}
 exists, so does the limit in the LHS of \eqref{eq:target_identity}, and the two are equal.
\end{corollary}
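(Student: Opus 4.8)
The plan is to sandwich the type-oblivious optimal guessing moment between the type-aware one and a polynomially inflated version of it, so that after taking $\tfrac1n\log(\cdot)$ the inflation disappears and a squeeze argument transfers existence and value of the limit. Write $A_n$ for the minimum of $\expectation\big[\big(\tilde{g}_{Y^n}(Y^n) + \tilde{g}_{X^n | Y^n}(X^n | Y^n)\big)^\rho\big]$ over all type-oblivious ranking pairs; this is the quantity inside the limit on the LHS of \eqref{eq:target_identity}, attained by $\tilde{g}_{Y^n}^*, \tilde{g}_{X^n | Y^n}^*$. Write $B_n$ for the minimum of the same expectation over ranking pairs cognizant of the joint type $\Pi_{X^nY^n}$; this is the quantity inside the limit \eqref{eq:type_given}, attained by $\tilde{g}_{T; Y^n}^*, \tilde{g}_{T; X^n | Y^n}^*$. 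I will show $B_n \leq A_n \leq k n^a B_n$.

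First I would establish the easy inequality $B_n \leq A_n$. Every type-oblivious ranking pair is a (degenerate) type-aware ranking pair that simply ignores $\Pi_{X^nY^n}$; equivalently, disclosing the joint type to the guesser is side information that can only reduce the achievable guessing moment. Since $B_n$ minimizes over a larger collection of strategies (a superset) than $A_n$ does, we obtain $B_n \leq A_n$. The reverse inequality $A_n \leq k n^a B_n$ is precisely \eqref{eq:ub_si} of Lemma~\ref{lmm:cost_of_si}, so for this direction I would simply invoke the lemma, matching its LHS to $A_n$ and its RHS expectation to $B_n$.

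Combining the two bounds gives $B_n \leq A_n \leq k n^a B_n$, and applying $\tfrac1n\log(\cdot)$ yields
\[
 \tfrac1n\log B_n \;\leq\; \tfrac1n\log A_n \;\leq\; \tfrac1n\log B_n + \tfrac{1}{n}\bigl(\log k + a\log n\bigr).
\]
Since $\tfrac{1}{n}\bigl(\log k + a\log n\bigr) \to 0$ as $n \to \infty$, the hypothesis that $\lim_{n\to\infty}\tfrac1n\log B_n$ exists (i.e., that the limit \eqref{eq:type_given} exists) forces, by the squeeze theorem, the existence of $\lim_{n\to\infty}\tfrac1n\log A_n$ and its equality with the former. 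This is exactly the assertion of the corollary.

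I do not expect any genuine obstacle inside the corollary itself: all the substantive work—namely that revealing the joint type costs at most a polynomial factor, proved by interleaving the optimal per-type rankings and exploiting that there are only polynomially many denominator-$n$ types on $\set{X}\times\set{Y}$—is packaged in Lemma~\ref{lmm:cost_of_si}, which I am taking as given. Within the corollary, the only thing to get right is the orientation of the two inequalities: side information helps, giving $B_n \leq A_n$, and it helps by at most a polynomial factor, giving $A_n \leq k n^a B_n$; the vanishing of $\tfrac1n\log(kn^a)$ then does the rest.
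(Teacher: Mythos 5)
Your proposal is correct and matches the paper's own argument: the paper likewise combines the polynomial-overhead bound \eqref{eq:ub_si} (giving $\limsup_n \tfrac1n\log A_n \leq \lim_n \tfrac1n\log B_n$ after the $\tfrac1n(\log k + a\log n)$ term vanishes) with the observation that type-cognizant strategies subsume type-oblivious ones (giving the reverse $\liminf$ inequality). The only cosmetic difference is that the paper phrases the squeeze via separate $\limsup$/$\liminf$ chains rather than a single two-sided inequality on $A_n$ and $B_n$.
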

\vspace*{-0.4cm}
\begin{proof}
 By \eqref{eq:ub_si},
 \vspace*{-0.3cm}
 \begin{align}
 &\limsup_{n \to \infty} \min_{\tilde{g}_{Y^n}, \tilde{g}_{X^n | Y^n}}
 \frac{1}{n}\log\expectation\big[\big(\tilde{g}_{Y^n}(Y^n) + \tilde{g}_{X^n | Y^n}(X^n | Y^n)\big)^\rho\big] \nonumber \\
 &\quad = \limsup_{n \to \infty} \frac{1}{n}\log\expectation\big[\big(\tilde{g}_{Y^n}^*(Y^n)
 + \tilde{g}_{X^n | Y^n}^*(X^n | Y^n)\big)^\rho\big]\\
 &\quad \leq \limsup_{n \to \infty} \frac{1}{n}\log\Big(\expectation\big[\big(\tilde{g}_{T; Y^n}^*(Y^n)
 + \tilde{g}_{T; X^n | Y^n}^*(X^n | Y^n)\big)^\rho\big] \, k n^a \Big)\\
 &\quad = \limsup_{n \to \infty} \frac{1}{n}\Big(\log\expectation\big[\big(\tilde{g}_{T; Y^n}^*(Y^n)
 + \tilde{g}_{T; X^n | Y^n}^*(X^n | Y^n)\big)^\rho\big] + \log(k) + a \log(n)\Big)\\
 &\quad = \limsup_{n \to \infty} \frac{1}{n}\log\expectation\big[\big(\tilde{g}_{T; Y^n}^*(Y^n)
 + \tilde{g}_{T; X^n | Y^n}^*(X^n | Y^n)\big)^\rho\big]\\
 &\quad = \lim_{n \to \infty} \frac{1}{n}\log\expectation\big[\big(\tilde{g}_{T; Y^n}^*(Y^n)
 + \tilde{g}_{T; X^n | Y^n}^*(X^n | Y^n)\big)^\rho\big].
 \end{align}

 The inverse inequality
 \begin{align}
 &\liminf_{n \to \infty} \min_{\tilde{g}_{Y^n}, \tilde{g}_{X^n | Y^n}}
 \frac{1}{n}\log\expectation\big[\big(\tilde{g}_{Y^n}(Y^n) + \tilde{g}_{X^n | Y^n}(X^n | Y^n)\big)^\rho\big] \nonumber \\
 &\quad \geq \lim_{n \to \infty} \frac{1}{n}\log\expectation\big[\big(\tilde{g}_{T; Y^n}^*(Y^n)
 + \tilde{g}_{T; X^n | Y^n}^*(X^n | Y^n)\big)^\rho\big]
 \end{align}
 follows from the fact that an optimal guessing strategy depending on the empirical joint type
 $\Pi_{X^nY^n}$ of $(X^n, Y^n)$ cannot be outperformed by a guessing strategy ignorant of $\Pi_{X^nY^n}$.
\end{proof}
Corollary \ref{cor:type_given} states that the minimization in the LHS of \eqref{eq:target_identity}
can be taken over guessing strategies cognizant of the empirical joint type of $(X^n, Y^n)$. As we
show in the next lemma, this implies that evaluating the LHS of \eqref{eq:target_identity} can be
further simplified by taking the expectation with $(X^n, Y^n)$ drawn uniformly at random from a type class
(instead of being i.i.d. $P_{XY}$).

\vspace*{-0.3cm}
\begin{lemma}\label{lmm:expec_over_type}
 Let $\expectation_{Q_{XY}}$ denote expectation with $(X^n, Y^n)$ drawn uniformly at random from the type
 class $\tcl{n}(Q_{XY})$. Then, the following limits exist and
 \begin{align}
 &\lim_{n \to \infty} \min_{\tilde{g}_{T; Y^n}, \tilde{g}_{T; X^n | Y^n}} \frac{1}{n}
 \log\expectation\big[\big(\tilde{g}_{T; Y^n}(Y^n) + \tilde{g}_{T; X^n | Y^n}(X^n | Y^n)\big)^\rho\big]\nonumber\\
 &\quad = \lim_{n \to \infty} \max_{Q_{XY}} \Big(\min_{\tilde{g}_{T; Y^n}, \tilde{g}_{T; X^n | Y^n}}
 \frac{1}{n} \log\expectation_{Q_{XY}}\big[\big(\tilde{g}_{T; Y^n}(Y^n) + \tilde{g}_{T; X^n | Y^n}(X^n | Y^n)\big)^\rho\big] \nonumber \\
 &\quad \hspace*{2.5cm} - D(Q_{XY}\|P_{XY})\Big), \label{eq:limit_over_type}
 \end{align}
 where the maximum in the RHS of \eqref{eq:limit_over_type} is taken over all denominator-$n$ types on
 $\set{X} \times \set{Y}$.
\end{lemma}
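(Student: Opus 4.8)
The plan is to condition on the empirical joint type and exploit that, per Corollary~\ref{cor:type_given}, the ranking functions under consideration may be chosen as functions of $\Pi_{X^nY^n}$. Write $S_n := \tilde{g}_{T; Y^n}(Y^n) + \tilde{g}_{T; X^n \mid Y^n}(X^n \mid Y^n)$ for the total guess count, and set $M_n(Q_{XY}) := \min \expectation_{Q_{XY}}[S_n^\rho]$, the minimum being over the type-cognizant ranking functions. First I would apply the law of total expectation, partitioning the probability space according to the denominator-$n$ type of $(X^n, Y^n)$. Since, conditioned on $\{\Pi_{X^nY^n} = Q_{XY}\}$, the pair $(X^n, Y^n)$ is uniform over the type class $\tcl{n}(Q_{XY})$, and since a type-cognizant ranking function may be optimized separately on each type class (the ranking applied on $\{\Pi_{X^nY^n} = Q_{XY}\}$ influences only the matching summand), the minimization decouples across types:
\[
 \min_{\tilde{g}_{T; Y^n}, \tilde{g}_{T; X^n \mid Y^n}} \expectation[S_n^\rho]
 = \sum_{Q_{XY}} P_{XY}^n\bigl(\tcl{n}(Q_{XY})\bigr) \, M_n(Q_{XY}),
\]
where the sum runs over all denominator-$n$ types on $\set{X} \times \set{Y}$.

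Next I would invoke two standard method-of-types estimates: the number of denominator-$n$ types on $\set{X} \times \set{Y}$ is at most $(n+1)^{|\set{X}|\,|\set{Y}|}$, and $(n+1)^{-|\set{X}|\,|\set{Y}|}\exp\bigl(-n D(Q_{XY}\|P_{XY})\bigr) \leq P_{XY}^n\bigl(\tcl{n}(Q_{XY})\bigr) \leq \exp\bigl(-n D(Q_{XY}\|P_{XY})\bigr)$. Substituting these into the decoupled sum and bounding a sum of polynomially many nonnegative terms between its maximum and its maximum times the number of terms, I obtain, on the exponential scale,
\[
 \frac{1}{n}\log \min_{\tilde{g}_{T; Y^n}, \tilde{g}_{T; X^n \mid Y^n}} \expectation[S_n^\rho]
 = \max_{Q_{XY}} \Bigl( \tfrac{1}{n}\log M_n(Q_{XY}) - D(Q_{XY}\|P_{XY}) \Bigr) + \varepsilon_n,
\]
where the error $\varepsilon_n = O\bigl(\tfrac{\log n}{n}\bigr)$ absorbs every polynomial prefactor and tends to $0$. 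This non-asymptotic identity is the crux of the reduction: it shows that the sequence inside the limit on the LHS of \eqref{eq:limit_over_type} and the sequence $\max_{Q_{XY}}(\tfrac1n\log M_n(Q_{XY}) - D(Q_{XY}\|P_{XY}))$ differ by a null sequence, so that if either limit exists the other exists and they coincide.

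It remains to establish the existence asserted in the lemma. For this I would show that $\tfrac1n\log M_n(Q_{XY})$ converges, as the denominator-$n$ type $Q_{XY}$ approaches a fixed PMF, to a function continuous on the simplex; this is exactly the uniform-over-type-class guessing exponent evaluated explicitly in the analysis that follows this lemma, so I would quote that value here. Given convergence to a continuous limit, together with the continuity of $Q_{XY} \mapsto D(Q_{XY}\|P_{XY})$, the $n$-dependent maximum $\max_{Q_{XY}}(\tfrac1n\log M_n(Q_{XY}) - D(Q_{XY}\|P_{XY}))$ converges to the supremum of the limiting objective over all PMFs on $\set{X} \times \set{Y}$.

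I expect the main obstacle to be precisely this last step --- the interchange of $\lim_{n}$ with a maximum whose feasible set, the collection of denominator-$n$ types, itself changes with $n$. The standard device is the asymptotic density of denominator-$n$ types in the probability simplex: I would approximate any target PMF by a convergent sequence of denominator-$n$ types and control the objective along it using uniform continuity, taking care that the divergence term $D(\cdot\|P_{XY})$ is $+\infty$ (hence harmless, since the corresponding type class is empty under $P_{XY}^n$) on types not absolutely continuous with respect to $P_{XY}$. The decoupling of the minimization across type classes and the polynomial bound on the number of types are routine once the type-cognizance of Corollary~\ref{cor:type_given} is in hand.
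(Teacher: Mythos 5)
Your proposal is correct and follows essentially the same route as the paper: the law of total expectation over the events $\{\Pi_{X^nY^n}=Q_{XY}\}$, the decoupling of the type-cognizant minimization across type classes, the polynomial bound on the number of denominator-$n$ types, and the standard bounds $ (n+1)^{-(|\set{X}||\set{Y}|-1)}2^{-nD(Q_{XY}\|P_{XY})}\leq \prob[\Pi_{X^nY^n}=Q_{XY}]\leq 2^{-nD(Q_{XY}\|P_{XY})}$ are exactly the ingredients of the paper's two-sided sandwich. Your added care about the existence of the limit and the interchange of $\lim_n$ with the maximum over types is welcome but matches what the paper defers to the subsequent evaluation of the per-type-class exponent and the density-of-types step at the end of the proof of Theorem~\ref{thm:main_robert}.
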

\begin{proof}
 Recall that $\Pi_{X^nY^n}$ is the empirical joint type of $(X^n, Y^n)$, and let $\typ{n}(\set{X} \times \set{Y})$
 denote the set of all denominator-$n$ types on $\set{X} \times \set{Y}$. We prove Lemma \ref{lmm:expec_over_type}
 by applying the law of total expectation to the LHS of \eqref{eq:limit_over_type} (averaging over the events
 $\{\Pi_{X^nY^n} = Q_{XY}\}, Q_{XY} \in \typ{n}(\set{X} \times \set{Y})$) and by approximating the probability
 of observing a given type using standard tools from large deviations theory. We first show that the LHS of
 \eqref{eq:limit_over_type} is upper bounded by its RHS:
 \begin{align}
 &\lim_{n \to \infty} \min_{\tilde{g}_{T; Y^n}, \tilde{g}_{T; X^n | Y^n}}
 \frac{1}{n}\log\expectation\big[\big(\tilde{g}_{T; Y^n}(Y^n) + \tilde{g}_{T; X^n | Y^n}(X^n | Y^n)\big)^\rho\big]\nonumber\\[0.1cm]
 &\quad = \lim_{n \to \infty} \frac{1}{n}\log\Bigg(\sum_{Q_{XY}}
 \min_{\tilde{g}_{T; Y^n}, \tilde{g}_{T; X^n | Y^n}} \expectation_{Q_{XY}}\big[\big(\tilde{g}_{T; Y^n}(Y^n)
 + \tilde{g}_{T; X^n | Y^n}(X^n | Y^n)\big)^\rho\big] \nonumber\\
 &\quad \hspace*{2.7cm} \prob[\Pi_{X^nY^n} = Q_{XY}]\Bigg) \\
 &\quad \leq \lim_{n \to \infty} \frac{1}{n}\log\bigg(\max_{Q_{XY}}
 \min_{\tilde{g}_{T; Y^n}, \tilde{g}_{T; X^n | Y^n}} \expectation_{Q_{XY}}\big[\big(\tilde{g}_{T; Y^n}(Y^n)
 + \tilde{g}_{T; X^n | Y^n}(X^n | Y^n)\big)^\rho\big] \nonumber\\
 &\quad \hspace*{2.7cm} \prob[\Pi_{X^nY^n} = Q_{XY}] \, n^\alpha\bigg)\label{eq:expec_over_type_block_1_2}\\
 &\quad = \lim_{n \to \infty} \frac{1}{n}\log\bigg(\max_{Q_{XY}} \min_{\tilde{g}_{T; Y^n},
 \tilde{g}_{T; X^n | Y^n}} \expectation_{Q_{XY}}\big[\big(\tilde{g}_{T; Y^n}(Y^n)
 + \tilde{g}_{T; X^n | Y^n}(X^n | Y^n)\big)^\rho\big] \nonumber\\
 &\quad \hspace*{2.7cm} \prob[\Pi_{X^nY^n} = Q_{XY}]\bigg)\\
 &\quad \leq \lim_{n \to \infty} \frac{1}{n}\log\bigg(\max_{Q_{XY}} \min_{\tilde{g}_{T; Y^n},
 \tilde{g}_{T; X^n | Y^n}} \expectation_{Q_{XY}}\big[\big(\tilde{g}_{T; Y^n}(Y^n)
 + \tilde{g}_{T; X^n | Y^n}(X^n | Y^n)\big)^\rho\big] \nonumber\\
 &\quad \hspace*{2.7cm} 2^{-n D(Q_{XY}\|P_{XY})}\bigg),\label{eq:expec_over_type_block_1_4}
 \end{align}
 where \eqref{eq:expec_over_type_block_1_2} holds for sufficiently large $\alpha$ because the number of types
 grows polynomially in $n$ (see Appendix \ref{app:a}); and \eqref{eq:expec_over_type_block_1_4} follows from
 \cite{Cover_Thomas} (Theorem~11.1.4). We next show that the LHS of \eqref{eq:limit_over_type} is also lower
 bounded by its RHS:
 \begin{align}
 &\lim_{n \to \infty} \min_{\tilde{g}_{T; Y^n}, \tilde{g}_{T; X^n | Y^n}} \frac{1}{n}
 \log\expectation\big[\big(\tilde{g}_{T; Y^n}(Y^n) + \tilde{g}_{T; X^n | Y^n}(X^n | Y^n)\big)^\rho\big]\nonumber\\
 &\quad = \lim_{n \to \infty} \frac{1}{n}\log\Bigg(\sum_{Q_{XY}} \min_{\tilde{g}_{T; Y^n},
 \tilde{g}_{T; X^n | Y^n}} \expectation_{Q_{XY}}\big[\big(\tilde{g}_{T; Y^n}(Y^n)
 + \tilde{g}_{T; X^n | Y^n}(X^n | Y^n)\big)^\rho\big] \nonumber\\
 &\quad \hspace*{2.7cm} \prob[\Pi_{X^nY^n} = Q_{XY}]\Bigg)\\
 &\quad \geq \lim_{n \to \infty} \frac{1}{n}\log\bigg(\max_{Q_{XY}} \min_{\tilde{g}_{T; Y^n},
 \tilde{g}_{T; X^n | Y^n}} \expectation_{Q_{XY}}\big[\big(\tilde{g}_{T; Y^n}(Y^n) +
 \tilde{g}_{T; X^n | Y^n}(X^n | Y^n)\big)^\rho\big] \nonumber\\
 &\quad \hspace*{2.7cm} \prob[\Pi_{X^nY^n} = Q_{XY}] \bigg)\\
 &\quad \geq \lim_{n \to \infty} \frac{1}{n}\log\bigg(\max_{Q_{XY}}
 \min_{\tilde{g}_{T; Y^n}, \tilde{g}_{T; X^n | Y^n}} \expectation_{Q_{XY}}\big[\big(\tilde{g}_{T; Y^n}(Y^n)
 + \tilde{g}_{T; X^n | Y^n}(X^n | Y^n)\big)^\rho\big] \nonumber\\
 &\quad \hspace*{2.7cm} 2^{-n [D(Q_{XY}\|P_{XY})+\delta_n]} \bigg)\label{eq:expec_over_type_block_2_3}\\
 &\quad \geq \lim_{n \to \infty} \frac{1}{n}\log\bigg(\max_{Q_{XY}} \min_{\tilde{g}_{T; Y^n},
 \tilde{g}_{T; X^n | Y^n}} \expectation_{Q_{XY}}\big[\big(\tilde{g}_{T; Y^n}(Y^n) +
 \tilde{g}_{T; X^n | Y^n}(X^n | Y^n)\big)^\rho\big]\nonumber\\
 &\quad \hspace*{2.7cm} 2^{-n D(Q_{XY}\|P_{XY})}\bigg),
 \end{align}
 where in \eqref{eq:expec_over_type_block_2_3},
 \begin{align} \label{delta_n}
 \delta_n := \frac{(|\set{X}||\set{Y}|-1) \log(n+1)}{n}
 \end{align}
 tends to zero as we let $n$ tend to infinity, and the inequality in \eqref{eq:expec_over_type_block_2_3}
 follows again from \cite{Cover_Thomas} (Theorem~11.1.4). Together, \eqref{eq:expec_over_type_block_1_4} and
 \eqref{eq:expec_over_type_block_2_3} imply the equality in \eqref{eq:limit_over_type}.
\end{proof}
We now have established the first part of the proof of Theorem~\ref{thm:main_robert}: In Corollary~\ref{cor:type_given},
we showed that the ranking functions $\tilde{g}_{Y^n}$ and $\tilde{g}_{X^n | Y^n}$ can be assumed cognizant of the
empirical joint type of $(X^n, Y^n)$, and in Lemma~\ref{lmm:expec_over_type}, we showed that under this assumption,
the minimization of the $\rho$-th moment of the total number of guesses can be carried out with $(X^n, Y^n)$ drawn uniformly
at random from a type class.

Below, we give the second part of the proof: We show that if the pair $(X^n, Y^n)$ is drawn uniformly
at random from $\tcl{n}(Q_{XY})$, then

\vspace*{-1cm}
\begin{align}\label{eq:optimal_exponent_tcl}
 &\lim_{n \to \infty} \min_{\tilde{g}_{Y^n}, \tilde{g}_{X^n | Y^n}}
 \frac{1}{n}\log\expectation\big[\big(\tilde{g}_{Y^n}(Y^n) + \tilde{g}_{X^n | Y^n}(X^n | Y^n)\big)^\rho\big] \nonumber \\
 & \quad = \rho \, \min\big\{H(Q_X), \max\{H(Q_Y), H(Q_{X | Y})\big\} \big\}, \quad \rho > 0.
\end{align}

Note that Corollary~\ref{cor:type_given} and Lemma~\ref{lmm:expec_over_type} in conjunction with \eqref{eq:optimal_exponent_tcl}
conclude the proof of Theorem~\ref{thm:main_robert}:

\vspace*{-1cm}
\begin{align}
 &\lim_{n \to \infty} \min_{\tilde{g}_{Y^n}, \tilde{g}_{X^n | Y^n}} \frac{1}{n}
 \log\expectation\big[\big(\tilde{g}_{Y^n}(Y^n) + \tilde{g}_{X^n | Y^n}(X^n | Y^n)\big)^\rho\big]\nonumber\\[-0.1cm]
 &\quad = \lim_{n \to \infty} \min_{\tilde{g}_{T; Y^n}, \tilde{g}_{T; X^n | Y^n}}
 \frac{1}{n}\log\expectation\big[\big(\tilde{g}_{T; Y^n}(Y^n) + \tilde{g}_{T; X^n | Y^n}(X^n | Y^n)\big)^\rho\big]\\[-0.1cm]
 &\quad = \lim_{n \to \infty} \max_{Q_{XY}} \Big(\min_{\tilde{g}_{T; Y^n}, \tilde{g}_{T; X^n | Y^n}}
 \frac{1}{n} \log\expectation_{Q_{XY}}\big[\big(\tilde{g}_{T; Y^n}(Y^n) + \tilde{g}_{T; X^n | Y^n}(X^n | Y^n)\big)^\rho\big] \nonumber \\
 &\quad \hspace*{2.5cm} - D(Q_{XY}\|P_{XY})\Big)\label{eq:pro_final}\\
 &\quad = \sup_{Q_{XY}} \Big(\rho\min\big\{H(Q_X), \max\big\{H(Q_Y), H(Q_{X \mid Y})\big\}\big\} - D(Q_{XY} || P_{XY})\Big), \label{2407a}
\end{align}
where the supremum in the RHS of \eqref{2407a} is taken over all PMFs $Q_{XY}$ on $\set{X} \times \set{Y}$, and the step follows from \eqref{eq:pro_final} because the set of types is dense in the set of all PMFs (in the same sense that $\mathbb{Q}$ is dense in $\mathbb{R}$).

It thus remains to prove \eqref{eq:optimal_exponent_tcl}. We begin with the direct part:
Note that when $(X^n, Y^n)$ is drawn uniformly at random from $\tcl{n}(Q_{XY})$, the $\rho$-th moment
of the total number of guesses grows exponentially with rate $\rho H(Q_X)$ if we skip Stage~1
and guess $X^n$ directly and with rate $\rho\max\big\{H(Q_Y), H(Q_{X | Y})\big\}$ if we guess
$Y^n$ in Stage~1 before moving on to guessing $X^n$. To prove the second claim, we argue by case
distinction on $\rho$. Assuming first that $\rho \leq 1$,
\begin{align}
 & \hspace*{-0.5cm} \lim_{n \to \infty} \frac{1}{n}\log\expectation\big[\big(\tilde{g}_{T; Y^n}(Y^n)
 + \tilde{g}_{T; X^n | Y^n}(X^n | Y^n)\big)^\rho\big] \nonumber \\[0.1cm]
 &\leq \lim_{n \to \infty} \frac{1}{n}\log\expectation\big[\tilde{g}_{T; Y^n}(Y^n)^\rho
 + \tilde{g}_{T; X^n | Y^n}(X^n | Y^n)^\rho\big]\label{sequential_strategy_block_1_1}\\[0.1cm]
 &= \lim_{n \to \infty} \frac{1}{n}\log\Big(\!\expectation\big[\tilde{g}_{T; Y^n}(Y^n)^\rho\big]
 + \expectation\big[\tilde{g}_{T; X^n | Y^n}(X^n | Y^n)^\rho\big]\Big)\\[0.1cm]
 &\leq \lim_{n \to \infty} \frac{1}{n}\log\Big(2^{n\rho H(Q_Y)} + 2^{n\rho H(Q_{X | Y})}\Big)
 \label{sequential_strategy_block_1_2}\\
 &= \rho\max\big\{H(Q_Y), H(Q_{X | Y})\big\} \label{sequential_strategy_block_1_3},
\end{align}
where \eqref{sequential_strategy_block_1_1} holds because $\rho \leq 1$ (see Lemma~\ref{lemma1}); \eqref{sequential_strategy_block_1_2}
holds since, by the latter assumption, $Y^n$ is revealed at the end of Stage~1, and thus, the guesser (cognizant of $Q_{XY}$)
will only guess elements from the conditional type class $\tcl{n}(Q_{X | Y} | Y^n)$; and \eqref{sequential_strategy_block_1_3}
follows from the fact that the exponential growth rate of a sum of two exponents is dominated by the larger one.
We wrap up the argument by showing that the LHS of \eqref{sequential_strategy_block_1_1} is also lower bounded by the
RHS of \eqref{sequential_strategy_block_1_3}:
\begin{align}
 &\hspace*{-0.5cm} \lim_{n \to \infty} \frac{1}{n}\log\expectation\big[\big(\tilde{g}_{T; Y^n}(Y^n)
 + \tilde{g}_{T; X^n | Y^n}(X^n | Y^n)\big)^\rho\big] \nonumber \\
 &= \lim_{n \to \infty} \frac{1}{n}\log\expectation\bigg[2^\rho\Big(\tfrac12 \, \tilde{g}_{T; Y^n}(Y^n)
 + \tfrac12 \, \tilde{g}_{T; X^n | Y^n}(X^n | Y^n) \Big)^\rho\bigg]\label{sequential_strategy_block_2_1}\\
 &\geq \lim_{n \to \infty} \frac{1}{n}\log\expectation\big[2^{\rho - 1}\big(\tilde{g}_{T; Y^n}(Y^n)^\rho
 + \tilde{g}_{T; X^n | Y^n}(X^n | Y^n)^\rho\big)\big]\label{sequential_strategy_block_2_2}\\[0.1cm]
 &= \lim_{n \to \infty} \frac{1}{n}\log\Big(\!\expectation\big[\tilde{g}_{T; Y^n}(Y^n)^\rho\big]
 + \expectation\big[\tilde{g}_{T; X^n | Y^n}(X^n | Y^n)^\rho\big]\Big)\\
 &\geq \lim_{n \to \infty} \frac{1}{n}\log\bigg(\frac{1}{1 + \rho}\Big(2^{n\rho H(Q_Y)}
 + 2^{n\rho H(Q_{X | Y})}\Big)2^{-n \rho \delta_n}\bigg)\label{sequential_strategy_block_2_3}\\
 &= \rho\max\big\{H(Q_Y), H(Q_{X | Y})\big\},
\end{align}
where \eqref{sequential_strategy_block_2_2} follows from Jensen's inequality; and \eqref{sequential_strategy_block_2_3}
follows from \cite{moser19_12} (Proposition~6.6) and the lower bound on the size of a (conditional)
type class \cite{Cover_Thomas} (Theorem~11.1.3).
The case ${\rho > 1}$ can be proven analogously and is hence omitted (with the term $2^{\rho-1}$ in the RHS of
\eqref{sequential_strategy_block_2_2} replaced by~one). Note that by applying the better of the two proposed
guessing strategies (i.e., depending on $Q_{XY}$, either guess $Y^n$ in Stage~1 or skip it) the $\rho$-th
moment of the total number of guesses grows exponentially with rate
$\rho\min\big\{H(Q_X), \max\big\{H(Q_Y), H(Q_{X | Y}\big)\big\} \big\}$. This concludes the direct part of
the proof of \eqref{eq:optimal_exponent_tcl}. We remind the reader that while we have constructed a guessing
strategy under the assumption that the empirical joint type $\Pi_{X^nY^n}$ of $(X^n, Y^n)$ is known,
Lemma~\ref{lmm:cost_of_si} implies the existence of a guessing strategy of equal asymptotic performance that
does not depend on $\Pi_{X^nY^n}$. Moreover, Lemma \ref{lmm:cost_of_si} is constructive in the sense that the
type-independent guessing strategy can be explicitly derived from the type-cognizant one (cf. the proof of
Proposition~6.6 in \cite{moser19_12}).

We next establish the converse of \eqref{eq:optimal_exponent_tcl} by showing that when $(X^n, Y^n)$
is drawn uniformly at random from $\tcl{n}(Q_{XY})$,
\begin{align}
 &\hspace*{-0.5cm} \liminf_{n \to \infty} \frac{1}{n}\log\expectation\big[\big(\tilde{g}_{T; Y^n}(Y^n)
 + \tilde{g}_{T; X^n | Y^n}(X^n | Y^n)\big)^\rho\big] \nonumber \\
 &\geq \rho\min\big\{H(Q_X), \max\big\{H(Q_Y), H(Q_{X | Y})\big\}\big\} \label{eq:optimal_exponent}
\end{align}
for all two-stage guessing strategies. To see why \eqref{eq:optimal_exponent} holds, consider an
arbitrary guessing strategy, and let the sequence $n_1, n_2, \ldots$ be such that
\begin{align}
 &\hspace*{-0.5cm} \lim_{k \to \infty} \frac{1}{n_k}\log\expectation\big[\big(\tilde{g}_{T; Y^{n_k}}(Y^{n_k})
 + \tilde{g}_{T; X^{n_k} | Y^{n_k}}(X^{n_k} | Y^{n_k})\big)^\rho\big] \nonumber \\
 &= \liminf_{n \to \infty} \frac{1}{n}\log\expectation\big[\big(\tilde{g}_{T; Y^n}(Y^n)
 + \tilde{g}_{T; X^n | Y^n}(X^n | Y^n)\big)^\rho\big] \label{eq:liminf_sequence}
\end{align}
and such that the limit
\begin{equation}\label{eq:def_alpha}
 \alpha := \lim_{k \to \infty} \frac{|\set{G}_{n_k}|}{|\tcl{n_k}(Q_Y)|}
\end{equation}
exists. Using Lemma \ref{lmm:insufficient_guesses} below, we show that the LHS of \eqref{eq:liminf_sequence}
(and thus, also the LHS of \eqref{eq:optimal_exponent}) is lower bounded by $\rho H(Q_X)$ if $\alpha = 0$ and by
$\rho\max\big\{H(Q_Y), H(Q_{X | Y})\big\}$ if $\alpha > 0$. This establishes the converse, because the lower of
the two bounds must apply in any case.

\begin{lemma}\label{lmm:insufficient_guesses}
If the pair $(X^n, Y^n)$ is drawn uniformly at random from a type class $\tcl{n}(Q_{XY})$ and
 \begin{equation}\label{eq:insufficient_first_stage}
 \limsup_{n \to \infty} \frac{|\set{G}_n|}{|\tcl{n}(Q_Y)|} = 0,
 \end{equation}
 then
 \begin{equation}\label{eq:full_second_stage_exp}
 \lim_{n \to \infty} \frac{1}{n}\log\expectation\big[\tilde{g}_{X^n | Y^n}(X^n | Y^n)^\rho\big] = \rho H(Q_X),
 \end{equation}
 where $Q_X$ and $Q_Y$ denote the $X$- and $Y$-marginal of $Q_{XY}$.
\end{lemma}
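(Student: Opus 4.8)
The plan is to prove the equality by establishing matching lower and upper bounds on $\tfrac1n\log\expectation[\tilde{g}_{X^n|Y^n}(X^n|Y^n)^\rho]$, with the lower bound carrying essentially all of the content. First I would record two consequences of $(X^n,Y^n)$ being uniform on $\tcl{n}(Q_{XY})$. Since permuting coordinates maps the joint type class to itself, every $x^n\in\tcl{n}(Q_X)$ is compatible with the same number of sequences $y^n$, so the $X^n$-marginal is uniform on $\tcl{n}(Q_X)$ and, likewise, the $Y^n$-marginal is uniform on $\tcl{n}(Q_Y)$. Writing $N:=\card{\tcl{n}(Q_X)}$, which satisfies $\tfrac1n\log N\to H(Q_X)$ by Theorem~11.1.3 in \cite{Cover_Thomas}, the hypothesis \eqref{eq:insufficient_first_stage} gives $\prob[Y^n\in\set{G}_n]\leq \card{\set{G}_n}/\card{\tcl{n}(Q_Y)}\to 0$, hence $\prob[Y^n\notin\set{G}_n]\to 1$.

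For the lower bound I would exploit that, by the problem setup, on the event $\{Y^n\notin\set{G}_n\}$ the guesser must use a single ranking $g:=\tilde{g}_{X^n|Y^n}(\cdot\mid y^n)$ that is the same for all $y^n\notin\set{G}_n$. Discarding the nonnegative contribution of $\{Y^n\in\set{G}_n\}$ and conditioning on $\{Y^n\notin\set{G}_n\}$, the crux is a pointwise near-uniformity estimate on $\mu(x^n):=\prob[X^n=x^n\mid Y^n\notin\set{G}_n]$: from $\prob[X^n=x^n,\,Y^n\notin\set{G}_n]\leq\prob[X^n=x^n]=1/N$ one gets $\mu(x^n)\leq\bigl(N\,\prob[Y^n\notin\set{G}_n]\bigr)^{-1}=:c/N$ with $c\to 1$, uniformly in $x^n$ and regardless of how $\set{G}_n$ is chosen. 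Hence $\prob[g(X^n)\leq t\mid Y^n\notin\set{G}_n]\leq tc/N$ for every ranking $g$, so $g(X^n)\geq N/(2c)$ with conditional probability at least $\tfrac12$, and a Markov-type bound yields $\expectation[g(X^n)^\rho\mid Y^n\notin\set{G}_n]\geq\tfrac12\,(N/(2c))^\rho$. Multiplying by $\prob[Y^n\notin\set{G}_n]$ and taking $\tfrac1n\log$, the subexponential prefactors $\tfrac12$, $c$, and $\prob[Y^n\notin\set{G}_n]$ drop out in the limit, leaving $\liminf_{n\to\infty}\tfrac1n\log\expectation[\tilde{g}_{X^n|Y^n}(X^n|Y^n)^\rho]\geq\rho H(Q_X)$.

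For the matching upper bound I would use that the guesser is cognizant of the type $Q_{XY}$ and therefore knows $X^n\in\tcl{n}(Q_X)$; the optimal Stage-2 ranking thus ranks these $N$ sequences first, so that $\tilde{g}_{X^n|Y^n}(X^n|Y^n)\leq N$ holds deterministically and $\tfrac1n\log\expectation[\tilde{g}_{X^n|Y^n}(X^n|Y^n)^\rho]\leq\rho\,\tfrac1n\log N\to\rho H(Q_X)$. Combining the two bounds gives the claimed limit. Since the lower bound holds for \emph{every} admissible Stage-2 ranking (not merely the optimal one), it also furnishes exactly the $\geq\rho H(Q_X)$ estimate needed for the $\alpha=0$ case of the converse of \eqref{eq:optimal_exponent_tcl}.

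I expect the conditioning step to be the main obstacle: a priori, learning $\{Y^n\notin\set{G}_n\}$ might bias $X^n$ toward a small, easily guessed subset of $\tcl{n}(Q_X)$, which would destroy the lower bound. The pointwise estimate $\mu(x^n)\leq c/N$ is precisely what precludes this, as it shows that deleting a vanishing fraction of the $y^n$'s cannot concentrate the $X^n$-marginal, so no Stage-2 ranking can turn the side information $\{Y^n\notin\set{G}_n\}$ into an asymptotic advantage. The remaining bookkeeping---tracking $\liminf$ versus $\lim$ and checking that $c\to1$ and the constant factors are subexponential in $n$---is routine once this near-uniformity bound is established.
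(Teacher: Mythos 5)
Your proof is correct, but it reaches the key lower bound by a genuinely different route than the paper. The paper's argument is information-theoretic: it introduces the indicator $E_n$ of the event $\{Y^n\notin\set{G}_n\}$, uses $H(X^n)-H(X^n\mid E_n)=I(X^n;E_n)\le H(E_n)\to 0$ together with $\prob[E_n=1]\to1$ to conclude $\tfrac1n H(X^n\mid E_n=1)\to H(Q_X)$, and then invokes Arikan's lower bound on guessing moments in terms of the order-$\tfrac1{1+\rho}$ R\'enyi entropy, followed by the monotonicity of R\'enyi entropy in its order to pass from $H_{1/(1+\rho)}$ down to the Shannon entropy. You instead prove a pointwise near-uniformity bound $\prob[X^n=x^n\mid Y^n\notin\set{G}_n]\le c/N$ with $c\to1$ (using that the $X^n$-marginal is uniform on $\tcl{n}(Q_X)$, which you correctly justify by permutation symmetry, and which the paper uses implicitly for the $Y$-marginal as well), and deduce directly that any single ranking must place $X^n$ beyond $N/(2c)$ with conditional probability at least $\tfrac12$. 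Your route is more elementary and self-contained---it needs neither Arikan's theorem nor R\'enyi entropies---and it yields a clean non-asymptotic bound that makes transparent exactly why discarding a vanishing fraction of the $y^n$'s cannot concentrate the conditional law of $X^n$; the paper's route is shorter given that Arikan's bound is already part of its toolkit. Both treatments handle the upper bound identically (the type-cognizant optimal ranking satisfies $\tilde{g}_{X^n\mid Y^n}(X^n\mid Y^n)\le\card{\tcl{n}(Q_X)}$), and you rightly observe that only the lower bound, valid for every admissible Stage-2 ranking, is what the $\alpha=0$ branch of the converse actually consumes.
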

\begin{proof}
 Note that since
 \begin{equation}
 \lim_{n \to \infty} \frac{1}{n}\log\big|\tcl{n}(Q_X)\big| = H(Q_X),
 \end{equation}
 the RHS of \eqref{eq:full_second_stage_exp} is trivially upper bounded by $\rho H(Q_X)$. It thus suffices to show that
 \eqref{eq:insufficient_first_stage} yields the lower bound
 \begin{equation}\label{eq:full_second_stage_exp_forced}
 \liminf_{n \to \infty} \frac{1}{n}\log\expectation\big[\tilde{g}_{X^n | Y^n}(X^n | Y^n)^\rho\big] \geq \rho H(Q_X).
 \end{equation}
 To show that \eqref{eq:insufficient_first_stage} yields \eqref{eq:full_second_stage_exp_forced}, we define the indicator variable
 \begin{equation}
 E_n := \begin{dcases}
 0, &\text{if } Y^n \in \set{G}_n\\
 1, &\text{else},
 \end{dcases}
 \end{equation}
 and observe that due to \eqref{eq:insufficient_first_stage} and the fact that $Y^n$ is drawn uniformly at random from $\tcl{n}(Q_Y)$,
 \begin{equation}\label{eq:prb_to_zero}
 \lim_{n \to \infty} \prob[E_n = 1] = 1.
 \end{equation}
 Consequently, $H(E_n)$ tends to zero as $n$ tends to infinity, and because
 \begin{equation}
 H(X^n) - H(X^n \mid E_n) = I(X^n; E_n) \leq H(E_n),
 \end{equation}
we get
\begin{equation}
\lim_{n \to \infty} \big(H(X^n) - H(X^n \mid E_n)\big) = 0.
\end{equation}
 This and \eqref{eq:prb_to_zero} imply that
\begin{equation}\label{eq:same_limit}
\lim_{n \to \infty} \frac{1}{n} H(X^n) = \lim_{n \to \infty} \frac{1}{n} H(X^n \mid E_n = 1).
\end{equation}
To conclude the proof of Lemma \ref{lmm:insufficient_guesses}, we proceed as follows:
\newpage
 \begin{align}
 &\liminf_{n \to \infty} \frac{1}{n}\log\expectation[\tilde{g}_{X^n | Y^n}(X^n | Y^n)^\rho] \nonumber \\
 &\quad \geq \liminf_{n \to \infty} \frac{1}{n}\log\expectation[\tilde{g}_{X^n | Y^n}(X^n | Y^n)^\rho | E_n = 1]\label{eq:insufficient_guesses_block_1_1}\\
 &\quad \geq \liminf_{n \to \infty} \frac{1}{n}\, \rho\Rentr{1/(1 + \rho)}(X^n \mid E_n = 1)\label{eq:insufficient_guesses_block_1_2}\\
 &\quad \geq \liminf_{n \to \infty} \frac{1}{n}\, \rho H(X^n \mid E_n = 1)\label{eq:insufficient_guesses_block_1_3}\\
 &\quad = \liminf_{n \to \infty} \frac{1}{n}\, \rho H(X^n)\label{eq:insufficient_guesses_block_1_4}\\
 &\quad = \rho H(Q_X),
 \end{align}
 where \eqref{eq:insufficient_guesses_block_1_1} holds due to \eqref{eq:prb_to_zero} and the law of total expectation;
 \eqref{eq:insufficient_guesses_block_1_2} follows from \cite{Arikan96} (Theorem~1); \eqref{eq:insufficient_guesses_block_1_3}
 holds because the R\'enyi entropy is monotonically decreasing in its order and because $\rho > 0$; and \eqref{eq:insufficient_guesses_block_1_4} is due to \eqref{eq:same_limit}.
\end{proof}
We now conclude the proof of the converse part of \eqref{eq:optimal_exponent_tcl}. Assume first that $\alpha$ (as defined in \eqref{eq:def_alpha}) equals zero. By \eqref{eq:liminf_sequence} and Lemma \ref{lmm:insufficient_guesses},
\begin{align}
 &\liminf_{n \to \infty} \frac{1}{n}\log\expectation\big[\big(\tilde{g}_{T; Y^n}(Y^n) + \tilde{g}_{T; X^n | Y^n}(X^n | Y^n)\big)^\rho\big]\nonumber\\[0.1cm]
 &\quad = \lim_{k \to \infty} \frac{1}{n_k}\log\expectation\big[\big(\tilde{g}_{T; Y^{n_k}}(Y^{n_k}) + \tilde{g}_{T; X^{n_k} | Y^{n_k}}(X^{n_k} | Y^{n_k})\big)^\rho\big]\\[0.1cm]
 &\quad \geq \liminf_{k \to \infty} \frac{1}{n_k}\log\expectation\big[\tilde{g}_{T; Y^{n_k}}(X^{n_k} | Y^{n_k})^\rho\big]\\[0.1cm]
 &\quad = \rho H(Q_X),
\end{align}
establishing the first contribution to the RHS of \eqref{eq:optimal_exponent}. Next, let $\alpha > 0$. Applying \cite{moser19_12} (Proposition 6.6) in conjunction with \eqref{eq:def_alpha} and the fact that $Y^{n_k}$ is drawn uniformly at random from $\tcl{n_k}(Q_Y)$,
\begin{equation}\label{eq:eventually_tcl}
 \expectation\big[\tilde{g}_{T; Y^{n_k}}(Y^{n_k})^\rho\big] \geq \frac{\alpha}{2} \cdot \frac{2^{{n_k}\rho H(Q_Y)}2^{-n_k\delta_{n_k}}}{1 + \rho}
\end{equation}
for all sufficiently large $k$. Using \eqref{eq:eventually_tcl} and proceeding analogously as in \eqref{sequential_strategy_block_2_1} to \eqref{sequential_strategy_block_2_3}, we now establish the second contribution to the RHS of \eqref{eq:optimal_exponent}:
\begin{align}
 &\liminf_{n \to \infty} \frac{1}{n}\log\expectation\big[\big(\tilde{g}_{T; Y^n}(Y^n) + \tilde{g}_{T; X^n | Y^n}(X^n | Y^n)\big)^\rho\big]\nonumber\\[0.1cm]
 &\quad = \lim_{k \to \infty} \frac{1}{n_k}\log\expectation\big[\big(\tilde{g}_{T; Y^{n_k}}(Y^{n_k}) + \tilde{g}_{T; X^{n_k} | Y^{n_k}}(X^{n_k} | Y^{n_k})\big)^\rho\big]\\[0.2cm]
 &\quad \geq \liminf_{k \to \infty} \frac{1}{n_k}\log\Big(\!\expectation\big[\tilde{g}_{T; Y^{n_k}}(Y^{n_k})^\rho\big] + \expectation\big[\tilde{g}_{T; X^{n_k} | Y^{n_k}}(X^{n_k} | Y^{n_k})^\rho\big]\Big)\\[0.2cm]
 &\quad \geq \liminf_{k \to \infty} \frac{1}{n_k}\log\Bigg(\!\frac{\alpha}{2} \cdot \frac{2^{{n_k}\rho H(Q_Y)}2^{-n_k\delta_{n_k}}}{1 + \rho} + \expectation\big[\tilde{g}_{T; X^{n_k} | Y^{n_k}}(X^{n_k} | Y^{n_k})^\rho\big]\Bigg)\label{eq:one_of_two_strategies_block_1_3}\\[0.1cm]
 &\quad \geq \liminf_{k \to \infty} \frac{1}{n_k}\log\Bigg(\!\frac{\alpha}{2} \cdot \frac{2^{{n_k}\rho H(Q_Y)}2^{-n_k\delta_{n_k}}}{1 + \rho} + \frac{2^{n_k\rho H(Q_{X | Y})}2^{-n_k\delta_{n_k}}}{1 + \rho}\Bigg)\label{eq:one_of_two_strategies_block_1_4}\\[0.1cm]
 &\quad = \rho\max\big\{H(Q_Y), H(Q_{X | Y})\big\},
\end{align}
where \eqref{eq:one_of_two_strategies_block_1_3} is due to \eqref{eq:eventually_tcl}; and in \eqref{eq:one_of_two_strategies_block_1_4},
we granted the guesser access to $Y^n$ at the beginning of Stage~2.

\section{Summary and Outlook}\label{sec:conclusion}
We proposed a new variation on the Massey--Arikan guessing problem where, instead of guessing $X^n$ directly, the guesser is allowed to first produce guesses of a correlated ancillary sequence $Y^n$. We characterized the least achievable exponential growth rate (in $n$) of the $\rho$-th moment of the total number of guesses in the two stages when $X^n$ is i.i.d. according to $P_X$, $Y_i = f(X_i)$ for all $i \in [1:n]$, and the guesser must recover $Y^n$ in Stage 1 before proceeding to Stage 2 (Section \ref{section: guess Y^n first}, Theorems \ref{theorem1: 2-stage guessing} and \ref{theorem2: 2-stage guessing}); and when the pair $(X^n, Y^n)$ is jointly i.i.d. according to $P_{XY}$ and Stage 1 need not reveal $Y^n$ (Section \ref{section: Robert's part}, Theorem \ref{thm:main_robert}). Future directions of this work include:

\begin{enumerate}[1)]
 \item The generalization of our results to a larger class of sources (e.g., Markov sources);
 \item A study of the information-like properties of the guessing exponents \eqref{eq:first} and \eqref{eq:second};
 \item Finding the optimal block-wise description $Y^n = f(X^n)$ and its associated two-stage guessing exponent;
 \item The generalization of the cryptographic problems \cite{ArikanM99,BracherHL_IT19} to a setting where the adversary may also produce guesses of leaked side information.
\end{enumerate}

\section*{Acknowledgment}
The authors are indebted to Amos Lapidoth for his contribution
to Section~\ref{section: Robert's part} (see~\cite{GrackzykL19}).
The constructive comments in the review process, which helped to improve the
presentation, are gratefully acknowledged.

\newpage
\appendices
\renewcommand{\thesubsection}{\Alph{section}.\arabic{subsection}}

\section{Proof of Lemma~\ref{lemma1}}
\label{appendix: proof of Lemma 1}

\setcounter{equation}{0}
\setcounter{corollary}{0}
\renewcommand{\theequation}{\thesection\arabic{equation}}
\renewcommand{\thecorollary}{\thesection\arabic{corollary}}

If $\rho \geq 1$, then
\begin{align}
\left( \sum_{i=1}^k a_i \right)^\rho &= k^\rho \left( \frac1k
\sum_{i=1}^k a_i \right)^\rho \nonumber \\
&\leq k^\rho \cdot \frac1k \sum_{i=1}^k a_i^\rho \label{lemma1: 2} \\
&= k^{\rho-1} \sum_{i=1}^k a_i^\rho, \nonumber
\end{align}
where \eqref{lemma1: 2} holds by Jensen's inequality and since
the mapping $x \mapsto x^\rho$ for $x \geq 0$ is convex.
If at least one of the non-negative
$a_i$'s is positive (if all $a_i$'s are zero, it is trivial), then
\begin{align}
\left( \sum_{i=1}^k a_i \right)^\rho &= \left( \sum_{j=1}^k a_j \right)^\rho
\, \sum_{i=1}^k \left( \frac{a_i}{\sum_{j=1}^k a_j} \right) \nonumber \\
&\geq \left( \sum_{j=1}^k a_j \right)^\rho \, \sum_{i=1}^k
\left( \frac{a_i}{\sum_{j=1}^k a_j} \right)^\rho \label{lemma1: 3} \\
&= \sum_{i=1}^k a_i^\rho,
\end{align}
where \eqref{lemma1: 3} holds since $0 \leq \frac{a_i}{\sum_{j=1}^k a_j} \leq 1$
for all $i \in \OneTo{n}$, and $\rho \geq 1$.
If $\rho \in (0,1)$, then inequalities \eqref{lemma1: 2} and
\eqref{lemma1: 3} are reversed. The conditions for equalities
in \eqref{lemma1: 1} are easily verified.

\section{Proof of Lemma~\ref{lemma2}}
\label{appendix: proof of Lemma 2}

\setcounter{equation}{0}
\setcounter{corollary}{0}
\renewcommand{\theequation}{\thesection\arabic{equation}}
\renewcommand{\thecorollary}{\thesection\arabic{corollary}}

From \eqref{PMF Q}, $R_1(P_X)$ is a unit probability mass at~one and
$R_{|\set{X}|}(P_X) \equiv P_X$, so \eqref{a sequence} gives~that
\begin{align}
a_1(\alpha) = 0, \quad a_{|\set{X}|}(\alpha) = H_\alpha(X).
\end{align}
In view of Lemma~5 in \cite{Sason18b}, it follows that for all
$m \in \OneTo{|\set{X}|-1}$,
\begin{align}
\label{1:38 in Sason18b}
a_{m+1}(\alpha) &= \max_{Q \in \set{P}_{m+1} \,
: \, P_X \prec Q} H_{\alpha}(Q) \\
\label{inclusion1}
&\geq \max_{Q \in \set{P}_m \, : \, P_X \prec Q} H_{\alpha}(Q) \\
\label{2:38 in Sason18b}
&= a_m(\alpha),
\end{align}
where \eqref{1:38 in Sason18b} and \eqref{2:38 in Sason18b} are due
to \cite{Sason18b} (38) and \eqref{a sequence}; \eqref{inclusion1}
holds since $\set{P}_m \subset \set{P}_{m+1}$.

We next prove Item~b. Consider the sequence of functions
$\{f_m^\ast\}_{m=1}^{|\set{X}|}$, defined over the set
$\set{X}$. By construction (see Algorithm~2), $f_{|\set{X}|}^\ast$
is the identity function since all the respective $|\set{X}|$ nodes
in the Huffman algorithm stay un-changed in this case.
We also have $f_1^\ast(x) = 1$ for all $x \in \set{X}$
(in the latter case, by Algorithm~2, all nodes are merged by the
Huffman algorithm into a single node). Hence, from \eqref{b sequence},
\begin{align}
b_1(\alpha) = H_{\alpha}(X), \quad b_{|\set{X}|}(\alpha) = 0.
\end{align}
Consider the construction of the function $f_m^\ast$ by Algorithm~2.
Since the transition from $m+1$ to $m$ nodes is obtained by merging
two nodes without affecting the other $m-1$ nodes, it follows by the
data processing theorem for the Arimoto--R\'{e}nyi conditional entropy
(see \cite{FehrB14} (Theorem~2 and Corollary~1)) that, for all
$m \in \OneTo{|\set{X}|-1}$,
\begin{align} \label{by DPI}
b_{m+1}(\alpha) = H_{\alpha}\bigl( X \, | \, f_{m+1}^\ast(X) \bigr)
\leq H_{\alpha}\bigl( X \, | \, f_m^\ast(X) \bigr) = b_m(\alpha).
\end{align}

We finally prove Item~c. Suppose that $P_X$ is supported on
the set $\set{X}$. Under this assumption, it follows from
the strict Schur concavity of the R\'{e}nyi entropy that
the inequality in \eqref{inclusion1} is strict, and therefore,
\eqref{1:38 in Sason18b}--\eqref{2:38 in Sason18b} imply that
$a_m(\alpha) < a_{m+1}(\alpha)$ for all $m \in \OneTo{|\set{X}|-1}$.
In particular, Item~a implies that $0 < a_m(\alpha) < H_{\alpha}(X)$
holds for every $m \in \FromTo{2}{|\set{X}|-1}$.
Furthermore, the conditioning on $f_{m+1}^\ast(X)$ enables
distinguishing between the two labels of $\set{X}$, which correspond to
the pair of nodes that are being merged (by the Huffman algorithm)
in the transition from $f_{m+1}^\ast(X)$ to $f_m^\ast(X)$. Hence,
the inequality in \eqref{by DPI} turns out to be strict under the
assumption that $P_X$ is supported on the set $\set{X}$. In
particular, under that assumption, it follows from Item~b that
$0 < b_m(\alpha) < H_{\alpha}(X)$ holds for every
$m \in \FromTo{2}{|\set{X}|-1}$.

\section{Proof of Lemma \ref{lmm:cost_of_si}}\label{app:a}
\setcounter{equation}{0}
\setcounter{corollary}{0}
\renewcommand{\theequation}{\thesection\arabic{equation}}
\renewcommand{\thecorollary}{\thesection\arabic{corollary}}

We prove Lemma~\ref{lmm:cost_of_si} as a consequence of Corollary~\ref{corr:general_cost_of_si} below
and the fact that the number of denominator-$n$ types on a finite set grows polynomially in $n$
(\cite{Cover_Thomas}, Theorem~11.1.1).
\begin{corollary} (Moser \cite{moser19_12}, (6.47) and Corollary~6.10)\label{corr:general_cost_of_si}
 Let the random triple $(U, V, W)$ take values in the finite set $\set{U} \times \set{V} \times \set{W}$,
 and let $\big(\tilde{g}^*_U(\cdot), \tilde{g}^*_{U | V}(\cdot | \cdot)\big)$ and
 $\big(\tilde{g}^*_{U | W}(\cdot | \cdot), \tilde{g}^*_{U | V, W}(\cdot | \cdot, \cdot)\big)$
 be ranking functions that, for a given $\rho > 0$, minimize
 \begin{equation}
 \expectation\big[\big(\tilde{g}_{U}(U) + \tilde{g}_{U | V}(U | V)\big)^\rho\big]
 \end{equation}
 over all two-stage guessing strategies (with no access to $W$) and
 \begin{equation}
 \expectation\big[\big(\tilde{g}_{U | W}(U | W) + \tilde{g}_{U | V, W}(U | V, W)\big)^\rho\big]
 \end{equation}
 over all two-stage guessing strategies cognizant of $W$. Then,
 \begin{equation}
 \expectation\big[\big(\tilde{g}^*_{U}(U) + \tilde{g}^*_{U | V}(U | V)\big)^\rho\big] \leq \expectation\big[\big(\tilde{g}^*_{U | W}(U | W)
 + \tilde{g}^*_{U | V, W}(U | V, W)\big)^\rho\big] \, |\set{W}|^\rho.
 \end{equation}
\end{corollary}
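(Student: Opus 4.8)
The plan is to prove the corollary by the standard \emph{cost-of-side-information} construction: I would start from the optimal $W$-cognizant pair $\bigl(\tilde{g}^*_{U \mid W}, \tilde{g}^*_{U \mid V, W}\bigr)$ and build from it a single $W$-oblivious two-stage strategy whose guess count, for every realization $(U,V,W)$, exceeds the $W$-cognizant count by a factor of at most $|\set{W}|$. Since the left-hand side of the claimed inequality is the \emph{minimum} of the $\rho$-th guessing moment over all $W$-oblivious two-stage strategies, exhibiting one such strategy meeting the desired bound suffices.

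Concretely, for Stage~1 I would interleave the $|\set{W}|$ orderings $\{\tilde{g}^*_{U \mid W}(\cdot \mid w)\}_{w \in \set{W}}$ in round-robin fashion, guessing in the $k$-th round the rank-$k$ element of each of the $|\set{W}|$ orderings (skipping symbols already guessed), thereby obtaining a single bijection $\tilde{g}_U$ on $\set{U}$. For any realization $(U, W) = (u, w)$, the symbol $u$ occupies rank $r := \tilde{g}^*_{U \mid W}(u \mid w)$ in ordering $w$, so it is reached no later than position $|\set{W}| \, r$ of the round-robin, and removing duplicates only moves it earlier; hence $\tilde{g}_U(u) \le |\set{W}| \, \tilde{g}^*_{U \mid W}(u \mid w)$. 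Performing the same interleaving in Stage~2 separately for each value $V = v$ (over the orderings $\{\tilde{g}^*_{U \mid V, W}(\cdot \mid v, w)\}_{w}$) yields a valid $V$-dependent ranking function $\tilde{g}_{U \mid V}$ satisfying $\tilde{g}_{U \mid V}(u \mid v) \le |\set{W}| \, \tilde{g}^*_{U \mid V, W}(u \mid v, w)$.

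Adding the two bounds gives the pointwise estimate $\tilde{g}_U(U) + \tilde{g}_{U \mid V}(U \mid V) \le |\set{W}| \bigl(\tilde{g}^*_{U \mid W}(U \mid W) + \tilde{g}^*_{U \mid V, W}(U \mid V, W)\bigr)$ on the base of the power, so raising to the $\rho$-th power and taking expectations yields $\expectation\bigl[\bigl(\tilde{g}_U(U) + \tilde{g}_{U \mid V}(U \mid V)\bigr)^\rho\bigr] \le |\set{W}|^\rho \, \expectation\bigl[\bigl(\tilde{g}^*_{U \mid W}(U \mid W) + \tilde{g}^*_{U \mid V, W}(U \mid V, W)\bigr)^\rho\bigr]$. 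Finally, optimality of $\bigl(\tilde{g}^*_U, \tilde{g}^*_{U \mid V}\bigr)$ among $W$-oblivious strategies bounds the claimed left-hand side by the left-hand side of this last display, completing the argument. Note that because the factor $|\set{W}|$ is extracted from the base \emph{before} exponentiation, no sub/super-additivity of $x \mapsto x^\rho$ (i.e.\ Lemma~\ref{lemma1}) is needed.

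I expect the only delicate point to be the round-robin bookkeeping: verifying that the interleaved guessing order is a genuine ranking function (a bijection) and that the rank-$k$ element of any fixed ordering is indeed reached by position $|\set{W}| \, k$ even after duplicate symbols are discarded. Once the position bound $\tilde{g}_U(u) \le |\set{W}| \, \tilde{g}^*_{U \mid W}(u \mid w)$ and its Stage~2 analogue are nailed down, the remainder is the immediate pointwise-to-moment passage described above.
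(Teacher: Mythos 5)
Your proposal is correct, and it reconstructs precisely the interleaving (round-robin) argument that the paper delegates to the citation of Moser's lecture notes rather than proving itself: merge the $|\set{W}|$ conditional orderings into one $W$-oblivious order so that every rank inflates by at most a factor of $|\set{W}|$, do this in each stage, add the two pointwise bounds before raising to the power $\rho$, and invoke optimality of $\bigl(\tilde{g}^*_U, \tilde{g}^*_{U\mid V}\bigr)$. Your observation that extracting the factor $|\set{W}|$ at the level of the base makes Lemma~\ref{lemma1} unnecessary here is accurate, and the round-robin bookkeeping (the merged order is a bijection on $\set{U}$, and the rank-$r$ element of any fixed ordering appears by position $|\set{W}|\,r$) is the standard and correct justification.
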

Lemma \ref{lmm:cost_of_si} follows from Corollary \ref{corr:general_cost_of_si} with $U \leftarrow X^n,\, \set{U} \leftarrow \set{X}^n$; $V \leftarrow Y^n,\, \set{V} \leftarrow \set{Y}^n$; $W \leftarrow \Pi_{X^nY^n}$; $\set{W} \leftarrow \typ{n}(\set{X} \times \set{Y})$,
and by noticing that for all $n \in \naturals$,
\begin{equation}
 |\typ{n}(\set{X} \times \set{Y})|^\rho \leq (n + 1)^{\rho (|\set{X} \times \set{Y}| -1)} \leq k \, n^a,
\end{equation}
where $a := \rho \; \bigl(|\set{X} \times \set{Y}| -1 \bigr)$ and $k := 2^a$ are positive constants independent of $n$.

\end{document}